\DeclareMathAlphabet{\bi}{OML}{cmm}{b}{it}
\DeclareMathAlphabet{\bcal}{OMS}{cmsy}{b}{n}
\DeclareMathAlphabet{\brmn}{OT1}{cmr}{bx}{n}
\DeclareMathSymbol{\R}{\mathalpha}{AMSb}{"52}
\newcommand{\bgamma}{\boldsymbol{\gamma}}
\newcommand{\brho}{\boldsymbol{\rho}}
\newcommand{\bxi}{\boldsymbol{\xi}}
\newcommand{\Ac}{\mathcal{A}}
\newcommand{\Fc}{\mathcal{F}}
\newcommand{\Lc}{\mathcal{L}}
\newcommand{\Oc}{\mathcal{O}}
\newcommand{\Pc}{\mathcal{P}}
\newcommand{\Rc}{\mathcal{R}}
\newcommand{\Cb}{\mathbb{C}}
\newcommand{\Rb}{\mathbb{R}}
\newcommand{\Zb}{\mathbb{Z}}
\renewcommand{\o}{\omega}
\DeclareMathAlphabet{\bi}{OML}{cmm}{b}{it}
\DeclareMathAlphabet{\bcal}{OMS}{cmsy}{b}{n}
\DeclareMathAlphabet{\brmn}{OT1}{cmr}{bx}{n}
\def \x{\mathbf{x}}
\def \y{\mathbf{y}}
\newtheorem{theorem}{Theorem}
\newtheorem{lemma}[theorem]{Lemma}
\newtheorem{definition}{Definition}
\title{Performance Analysis of Convex LRMR based \\ Passive SAR Imaging}
\author{Eric Mason$^1$ and Birsen Yaz{\i}c{\i}$^{1*}$, \IEEEmembership{Senior Member, IEEE}
\thanks{$^1$ Yaz{\i}c{\i} and Mason are with the Department of Electrical, Computer and Systems Engineering,
Rensselaer Polytechnic Institute, 110 8th Street, Troy, NY 12180 USA E-mail: B.~Y: yazici@ecse.rpi.edu,
Phone: (518)-276 2905, Fax: (518)-276 6261.}
\thanks{$^*$ Corresponding author.}
% \thanks{Wang is with the Department of Information and Communication Engineering, Nanjing University of Aeronautics and Astronautics, Nanjing 210016, China E-mail: wanglrpi@gmail.com}
 \thanks{This work was supported by the Air Force Office of
 Scientific Research (AFOSR) under the agreement FA9550-16-1-0234, and by industrial and government membership fees to the Center for Surveillance Research, a National Science Foundation Industry and University Cooperative Research Program (I/UCRC) under agreement IIP-1539961. }}
\begin{document}% paper title

\maketitle

\begin{abstract}

Passive synthetic aperture radar (SAR) uses existing signals of opportunity such as communication and broadcasting signals.
%instead of dedicated transmitters, as in the case of active SAR imaging.
%With the recent growth in the number of commercial broadcast stations and the crowded electromagnetic spectrum, passive SAR has become a popular alternative to active imaging methods.
In our prior work, we have developed a low-rank matrix recovery (LRMR) method that can reconstruct scenes with extended and densely distributed point targets, overcoming shortcomings of conventional methods \cite{Mason15}.
%The approach is based on correlating two sets of bistatic measurements, which results in a linear mapping of a rank-one positive semi-definite (PSD) operator to correlated measurements.
The approach is based on correlating two sets of bistatic measurements, which results in a linear mapping of the  tensor product of the scene reflectivity with itself.
Recognizing this tensor product as a rank-one positive semi-definite (PSD) operator, we pose passive SAR image reconstruction as a LRMR problem with convex relaxation.
%This rank-one PSD operator is the tensor product of the unknown scene reflectivity with itself.
%We solve the resulting inverse problem by posing it as a convex nuclear-norm minimization problem.
% taking advantage of the mathematical structure, we use a trace-regularized, PSD constrained formulation.
%We solve the inverse problem by posing it as a PSD constrained, trace regularized, least-squares problem, taking advantage of the mathematical structure.
%numerical optimization based methods, that take advantage of the prior knowledge of this mathematical structure.

In this paper, we present a performance analysis of the convex LRMR-based passive SAR image reconstruction method.
%by giving conditions such that exact reconstruction is guaranteed.
%by deriving a condition on the minimum achievable resolution, such that exact reconstruction is guaranteed using the LRMR recovery results based on the restricted isometry property (RIP).
We use the restricted isometry property (RIP) and show that exact reconstruction is guaranteed under the condition that the pixel spacing or resolution satisfies a certain lower bound.
%to establish these guarantees, assuming a circular trajectory, broadband antennas, transmit pulse with flat spectrum, and a condition on the minimum pixel spacing or resolution.
% on the target spacing is satisfied.
% Our analysis approach relies on an asymptotic evaluation of the composition of the forward model and its adjoint. 
We show that for sufficiently large center frequencies, our method provides superior resolution than that of Fourier based methods,
making it a super-resolution technique.
Additionally, we show that phaseless imaging is a special case of our passive SAR imaging method.
We present extensive numerical simulation to validate our analysis.
%We show that when the center frequency tends to infinity and the resolution condition is satisfied, the solution to the problem is unique and obtainable by solving a convex nuclear norm minimization problem. We provide an algorithm to solve the problem, and prove its convergence. We validate these results with numerical simulations.
 % and numerically study the effect of sampling on performance.

\end{abstract}

\section{Introduction}

%\subsection{Motivation}
A passive synthetic aperture radar (SAR) uses signals of opportunity such as communication or broadcasting signals to form an image. Passive radars offer several advantages including reduced cost, simplicity of hardware, increased stealth and efficient use of electromagnetic spectrum. With the proliferation of transmitters of opportunity, there is a growing interest in passive radar applications \cite{Yarman10,Wang11,Wacks14,Wacks14_2,Arroyo13,Hack2014_2,Palmer13,Maslikowski14,Wang10}.
%\textbf{Eric, please cite some recent key publications including our Doppler work and Il-Young's IP paper etc.}
%\textcolor[rgb]{0.00,0.00,1.00}{\textbf{What is citation for Il's IP paper?}} \textcolor[rgb]{1.00,0.00,0.00}{It is L. Wang, I.Y Son, B. Yazıcı, "Passive Imaging using Distributed Apertures in Multiple Scattering Environments," Inverse Problems, Vol. 26, pp: 065002(37 pages), May 2010.}

%In this paper we provide an analytical study of the performance capabilities of the
%%low-rank matrix recovery (LRMR) based passive synthetic aperture radar (SAR)
%LRMR based passive SAR imaging method we have previously developed in \cite{Mason15,Mason2015b,Mason2016,Mason2016b}.
%Passive SAR, or more generally passive radar uses existing illuminators of opportunity (IO) as sources instead of a dedicated transmitter as in the active scenario, providing advantages such as increased stealth, reduced cost and better use of the crowded electromagnetic spectrum.
%This offers advantages such increased stealth and reduced cost.
%However, there are significant challenges that result from the use of illuminators of opportunity that are not present in active radar.
%For example, knowledge of the transmitter location and transmitted waveform may not be known, furthermore, the waveforms are typically narrowband and not ideal for radar.
%Typical, relatively wideband, sources are television and cell phone base stations, offering bandwidth up to $8$ MHz \cite{Palmer13,Malanowski09,Evers14,Krysik13}.

Widely used existing methods for the passive SAR problem can be categorized into two classes: passive coherent localization (PCL) \cite{Baker05_1,Baker05_2,Wu2015,Hack2014_2,Kulpa12,colone2009} and time-difference-of-arrival (TDOA) based backprojection \cite{Yarman10,Yarman08,Wang11,Wacks14_2,Wacks14,Qu2017,Shi2016}.
These approaches rely on certain assumptions and imaging configurations that limit their applicability.
Specifically, PCL requires a direct line-of-sight to a transmitter of opportunity and knowledge of transmitter location, and TDOA-based backprojection relies on a point target assumption on the scene.
In \cite{Mason15,Mason2015b}, we presented a low-rank matrix recovery (LRMR)-based alternative method that overcomes these shortcomings.
This method can image scenes with extended or densely distributed point targets, and does not require direct-line-of-sight to the transmitter or knowledge of the transmitted waveform \cite{Mason15}.

We assume that two or more spatially separated moving receivers acquire scattered field data from a scene of interest illuminated by transmitters of opportunity. We next cross-correlate received signals from different receivers. %The LRMR based method correlates measurements obtained at two spatially separated receivers.
The correlation process induces a linear mapping from the tensor product of the scene reflectivity with itself to cross-correlated measurements.
We linearize the problem by using the well-known \qq{lifting} approach, by noting that the tensor product of the scene reflectivity with itself is the kernel of a rank-one positive semi-definite (PSD) operator \cite{Demanet14,Candes13b,Bayram2017,Chai11}.
%Since the new unknown, referred to as the \textit{Kronecker scene}, is formed by taking the tensor product of a function with itself, it is rank-one and positive definite (PSD). 
%\textcolor[rgb]{1.00,0.00,0.00}{Eric, the sentence I commented out was repetitive. Please define Kronecker scene later in the narrative.} 
Thus, we approach image reconstruction as a rank minimization problem and pose it
%Since, rank minimization is a non-convex, NP-hard optimization problem, we solve it
using the well-known nuclear-norm as a convex relaxation \cite{Recht10,Fazel02}.

In this paper, we present performance analysis of convex LRMR-based passive SAR imaging method. Specifically, we show that under the condition that the distance between scatterers satisfies a certain lower bound, the recovery is exact, i.e.,
%\textcolor[rgb]{0.00,0.00,1.00}{
nuclear-norm minimization algorithm
%}
% \textcolor[rgb]{1.00,0.00,0.00}{convex relaxation algorithm} \textbf{Eric, please be more specific than what is in red}
 converges and recovers the location and reflectivities of scatterers exactly. The condition on the distance between the scatterers determines the highest achievable resolution for exact recovery. Given the bandwidth and operating frequencies of typical illuminators of opportunity, the achievable resolution is an order of magnitude higher than that of Fourier based techniques, making the convex LRMR-based passive SAR imaging a super-resolution technique.

Our analysis is based on the use of the restricted isometry property (RIP), one of several tools in LRMR theory used to establish exact reconstruction guarantees when solving
%\textcolor[rgb]{0.00,0.00,1.00}{
nuclear-norm minimization problems
%} 
\cite{Recht10,Oymak2011}.
%\textcolor[rgb]{1.00,0.00,0.00}{the nuclear norm minimization problem} \cite{Recht10,Oymak2011}.
%\textbf{Eric, are there other tools for analysis as well? This is just a question for my curiosity. Also, does the term nuclear norm minimization define the problem you are addressing and also encompass the algorithm used to address the problem? We have the rank minimization problem that is NP-hard, next its convex relaxation and finally nuclear norm regularized least-squares problem $P_5$. You solve P5 via Uzawa iterations. So, when you refer to "the solution", you refer to the solution provided by Uzawa iteration. You show its convergence and uniqueness than show that the solution is equivalent to the solution of the original rank minimization problem. The first bullet below seems to be misstated. Please clarify. Are you showing that the solution to P1 problem is unique or P5 problem? My understanding is that it is P5, but it is stated as P1.}
This property quantifies how close the forward model is to an isometry in terms of the restricted isometry constant (RIC), when the forward map is restricted to a set of rank-r matrices. We show that under certain conditions the RIC associated with the passive SAR forward model is small enough to guarantee that the solution to the
%\textcolor[rgb]{0.00,0.00,1.00}{
%affine constrained 
nuclear-norm minimization problem
%}
is exact.
%\textbf{Again, does the term nuclear norm minimization sufficient to define the problem we are addressing. It is somewhat vague. What do you think? We do not need to have mathematical precision, but we should be more specific than this.}
We use the method of stationary phase in determining the RIC. 

 Nuclear-norm minimization problem can be solved in a variety of ways \cite{Recht10,Cai2010,Combettes2011,Combettes2005}. 
 %\textcolor[rgb]{1.00,0.00,0.00}{Eric, please insert those references you had in the first version of the intro. In this paper we formulate ...and address it Uzawa's method ....Eric, please summarize the prob formulation and solution below in two sentences. Do not worry about being too precise at this point. Also, put a footnote explaining that the problem formulation and solution in our earlier work was different.}
%\textcolor[rgb]{0.00,0.00,1.00}{ 
Here, we pose it as a trace-minimization, PSD constrained problem with an affine data-fidelity constraint,
%Using Lagrange multipliers we convert the problem to an unconstrained form, 
and use Uzawa's method to solve the resulting saddle point problem\footnote{ This approach is different from the trace regularized least-squares formulation we used in \cite{Mason15}. The new formulation allows us to establish its equivalence to the nuclear norm-minimization problem that the RIP based recovery results apply.}.  
We then prove convergence to a unique minimizer, which recovers the exact unknown scene reflectivity when the RIP holds.  

Additionally, we show that phaseless imaging is a special case of our LRMR-based passive imaging method. %in which the signal at each antenna is auto-correlated instead of being cross-correlated with signals at other receivers.
%\textcolor[rgb]{0.00,0.00,1.00}{ 
Assuming the two receivers are co-located our passive SAR forward model reduces to an auto-correlation. %, and our analysis still applies.
This type of configuration is known as phaseless imaging \cite{Demanet14,Candes13b,Demanet13,Candes13a}.
Recovery results have been obtained for active phaseless imaging using a rectangular array in \cite{Chai11}.
Our analysis covers the phaseless imaging case and generalizes the result in \cite{Chai11} to arbitrary imaging geometries and passive imaging. It also provides a sufficient condition that proves it to be a super-resolution technique.
Additionally, we show that cross-correlation results in a smaller RIC than that of auto-correlated measurements, thus requiring lower center frequencies to obtain exact reconstruction.

The rest of this paper is organized as follows: in Section \ref{sec:PA_forward_model} we present the forward model and discretization of the problem. 
%We then review the relevant LRMR theory used in our analysis and provide an optimization algorithm in Section \ref{sec:PA_LRMR_theory}. 
In Section \ref{sec:PA_LRMR_theory}, we formulate a set of optimization problems for passive SAR imaging , present an optimization algorithm and use the relevant LRMR theory used in our analysis.
Section \ref{sec:PA_analysis_gaurantees} presents our main result and a sufficient condition for exact recovery. Section \ref{sec:phaseless} provides an extension of our analysis to phaseless imaging. In Section \ref{sec:results_discuss} we discuss the implications of our analysis and the validity of our assumptions. Section \ref{sec:numerical_simulations} presents numerical simulations.  Section \ref{sec:conclusion} concludes the paper.

\begin{table}[!htbp]
\caption{Notation} % title of Table
  \vspace{.1cm}
\centering % used for centering table
{\footnotesize
  \begin{tabular}{p{1.3in}p{1.5in}}%p{0.1in}p{0.7in}p{1.4in}}
  	\hline \hline
  Symbol & Description \\ % & &Symbol & Designation\\ [0.5ex] % inserts table
  %heading
  \hline % inserts single horizontal line
%  \\
%\begin{tabular}{c l } % centered columns (4 columns)
%\hline\hline %inserts double horizontal lines
%Symbol& \hspace{0.5in}Description \\ [0.5ex] % inserts table
%heading
%\hline % inserts single horizontal line\
$\omega$ & Fast-time temporal frequency \\
$s$ & Slow-time \\
$\bgamma_{i}(s)$ & Trajectory for $i^{th}$ receiver \\
$\y$ & Transmitter location \\
$\hat{\y}$ & Unit vector in transmitter direction: $\y/|\y|$ \\
$\alpha_{\y}$ & Transmitter elevation angle \\
$\psi(\bi x)$ & Ground topography function \\
$\brmn x = [\bi x, \psi(\bi x)]$ & Location on earth's surface \\
$\tilde{\rho}(\bi x)$ & Surface reflectivity \\
$\otimes$ & Kronecker product symbol \\
$f_i(\omega,s)$ & Received signal $i$-th the receiver \\
$d_{ij}(\omega,s)$ & Correlated measurements for $i,j$-th receivers \\
$\Lc_i$ & Bistatic forward mapping for the $i$-th receiver \\ % (see Eq.~\eqref{eq:received_signal}) \\
$\Fc_{ij}$ & Forward map for the $i,j$-th data \\ % (see Eq.~\eqref{eq:forward_model}) \\
$A_{ij}$ & Amplitude function of the forward map \\ % (see Eq. ~\eqref{eq:FM_amplitude}) \\
$\phi_{ij}$ & Phase function of the forward map \\ % (see Eq.~\eqref{eq:ff_range_term1}) \\
$\rho(\bi x, \bi x')$ & \textit{Kroneker scene} \\ % (see Eq.~\eqref{eq:kron_prod}) \\
$\Rc$ & Operator with kernel $\rho$ \\ %  (see Eq. ~\eqref{eq:kron_prodR}) \\
$\mathbf{F}$ & Discrete forward model \\
$\mathbf{d}$ & Discretized and vectorized data \\
$\brho$ $(\bar{\brho})$ & Discrete (and vectorized) Kronecker scene \\
$L_{g_i}$ & Distance from scene center to $i$-th receiver ($|\bgamma_i(s)|$) \\
$L_x$ & Scene size \\
$B$ & Bandwidth \\
$\omega_c$ & Center frequency \\
$\Delta_x$ & Minimum pixel (target) spacing \\
$S$ &  Slow-time set \\
$\Omega$ & Baseband frequency set \\
%$n_{i}$ & Noise process for the $i$-th receiver \\
%$C_{ij}$ & Covariance function for the $i,j$-th receiver \\ % (See Eq.~\eqref{eq:covariance})  \\
%$\Pc$ & Proximity operator of PSD cone \\ % (see Eq.~\eqref{eq:PSD_proj}) \\
%$\hat{\bm{\rho}}_n$ & Estimate of $\rho$ for the $n$-th iteration \\ % Discretized and vectorized estimate of $\rho$ for the $n$-th iteration \\
%$\theta_n, \beta_n$ & Acceleration parameter sequences \\
%$\mathbf{p}_n$ &  Temporary scene reflectivity estimate  \\ % Temporary sequence of the $n$-th discretized reconstructed reflectivity \\
%$\mathbf{d}_n$ & Data estimate for the $n$-th iteration \\
\hline \hline
%\br %inserts single line
  \end{tabular}}
  %\end{indented}
  \label{tab:notation}
\end{table} 

\section{Passive Imaging Forward Model}\label{sec:PA_forward_model}

We assume there are two airborne receivers flying over a scene of interest, illuminated by a transmitter of opportunity located at $\y \in \Rb^3$.
We denote the receiver trajectories by $\bgamma_{i}(s) \in \Rb^3, i = 1,2, s \in S := [s_a,s_b]$ where $s$ represents the slow-time variable.
% and $\Delta s=|s_b-s_a|$ is the size of the aperture.
An example of this configuration is displayed in Figure \ref{fig:passive_sar_fig_chap4}.

%Let \textcolor[rgb]{0.00,0.00,1.00}{ $\o \in [\o_c-B/2,\o_c+B/2]$ be the fast-time temporal frequency variable, where $B$ is the bandwidth and $\o_c$ is the center frequency. 
%$\Omega = [ -B/2,B/2 ]$ is the baseband frequency set.}
$\o \in [\o_c-B/2,\o_c+B/2]$ be the fast-time temporal frequency variable, where $B$ is the bandwidth and $\o_c$ is the center frequency. 
$\Omega = [ -B/2,B/2 ]$ is the baseband frequency set.
 The location on the surface of the earth is given by $\x = (\bi x,\psi(\bi x)) \in \mathbb{R}^3$, where $\bi x = (x_1,x_2) \in \Rb^2$ and $\psi: \mathbb{R}^2\rightarrow \mathbb{R}$ is a known ground topography.
%The ground reflectivity function is
$\tilde{\rho} : \mathbb{R}^2 \rightarrow \mathbb{R}$ denotes the ground reflectivity.
Table \ref{tab:notation} lists all major notation used throughout the paper.

\begin{figure}[!ht]
\centering
\includegraphics[scale=.45]{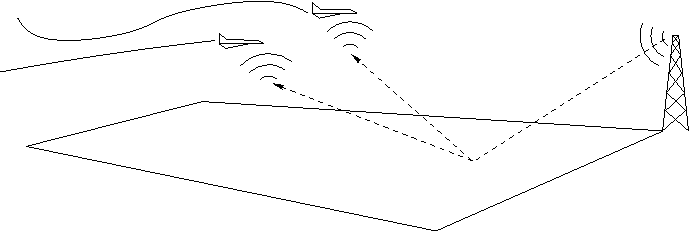}
\caption{An illustration of the generic passive SAR imaging configuration. The scene is illuminated by a stationary illuminator of opportunity. The backscattered signal is measured by two receivers.}% located at $\bgamma_1(s)$ and $\bgamma_2(s)$.}
\label{fig:passive_sar_fig_chap4}
\end{figure}

\subsection{Received Signal Model}

%In deriving the received signal model we assume the transmitter location, $\y$, is known.
%The diameter of the scene is small as compared to the range of the antennas, i.e. $|\x| \ll |\y|$, and $|\x| \ll |\bgamma(s)|$.
%The transmit and receive antennas are isotropic, the transmitted waveform, $P(\o)$, is known, and the system is narrowband, i.e. $.25\o \gg B$.

Under the start-stop and Born approximations, assuming waves propagate in free space, the fast-time temporal Fourier transform of the received signal at the $i$-th receiver due to a stationary transmitter located at $\y$ can be modeled as \cite{Yarman08_biSAR}
%\begin{equation}
%\begin{split}
%f_i(\omega,s) \approx & \  \Lc_i[\tilde{\rho}](\omega,s) := \int
%\mathrm{e}^{\mathrm{i}\omega r_{i}(s,\bi x, \y)/c_0} \\ & \times A_i(\omega,s,\bi x, \y)  \tilde{\rho}(\bi x) d\bi x, \ \ i=1,2
%\label{eq:received_signal}
%\end{split}
%\end{equation}
\begin{equation}
\begin{split}
f_i(\omega,s) \approx \ & \Lc_i[\tilde{\rho}](\omega,s) := \int \mathrm{e}^{-\mathrm{i}\omega \phi_{i}(s,\bi x, \y)/c_0} \\ & \times A_i(\omega,s,\bi x)  \tilde{\rho}(\bi x) d\bi x, \ \ i=1,2,
\end{split}
\label{eq:received_signal}
\end{equation}
where
\begin{equation}
\phi_{i}(s,\bi x, \y) = |\y - \x| + |\x-\bgamma_{i}(s)|,
\label{eq:range}
\end{equation}
and
\begin{equation}
A_{i}(\omega,s,\bi x) = \frac{J_i(\omega,\bi x)}{(4\pi)^2 |\x-\bgamma_{i}(s)||\y-\x|}.
\label{eq:receive_A}
\end{equation}
In \eqref{eq:receive_A}, $J_i(\o,\bi x)$ models the antenna beam patterns and transmitted waveform.
%Furthermore, we restrict our attention to the case where the receiver traverses a circular flight path, and focusses the beam on the same location for the entire aperture.
%This type of SAR configuration is sometimes referred to as \textit{circular SAR}.
% The circular trajectory is defined as $\bgamma(s) = [ r \cos(s), r\sin(s), h ]$, where $r$ is he radius and $h$ is the height for flat topography.
%With this trajectory, the distance of the receiver to the center of the scene is is defined as $L_g = |\bgamma(s)| = \sqrt{r^2 + h^2}$.
%Note that both receivers traverse the same trajectory with a constant angular offset $\beta_s$, such that $\bgamma_2(s) = \bgamma_1(s+\beta_s)$.

\subsection{Model for Correlated Measurements}

We correlate the bistatic measurements acquired by two spatially separated receivers, given in \eqref{eq:received_signal}, as follows:
\begin{equation}
d_{12} = f_1(\omega,s)f_2^*(\omega,s)
\label{eq:correlation}
\end{equation}
where $f^*$ denotes complex conjugation.
%and the constant $C = \sqrt{\frac{B}{2c_0 \Delta s}}$ serves as normalization, and is necessary in the subsequent analysis.
Substituting \eqref{eq:received_signal} into \eqref{eq:correlation}, assuming the range of transmitter is much larger than that of the scene (i.e. $|\y|\gg |\x|$), we model the correlated measurements as follows \cite{Mason15}:
\begin{equation}
\begin{split}
d_{12}(\omega,s) \approx \ & \mathcal{F}_{12} [\rho](\omega,s) := \int
 \mathrm{e}^{-\mathrm{i}\omega \phi_{12}(s,\bi x,\bi x')/c_0} \\ & \times A_{12}(\omega,s,\bi x,\bi x') \rho(\bi x,\bi x') d\bi x d\bi x'
 \end{split}
\label{eq:forward_model_chap4}
\end{equation}
%\begin{equation}
%d_{ij}(\omega,s) \approx \mathcal{F}_{ij} [\mathcal{R}](\omega,s) := \int
% \mathrm{e}^{-\mathrm{i}\omega \phi_{ij}(s,\bi x,\bi x')/c_0} A_{ij}(\omega,s,\bi x,\bi x') \rho(\bi x,\bi x') d\bi x d\bi x',  \ \ i,j=1,2,
%\label{eq:forward_model_chap4}
%\end{equation}
where
\begin{equation}
\rho(\bi x,\bi x') = \tilde{\rho}(\bi x)\tilde{\rho}^*(\bi x')
\end{equation}
and
\begin{equation}
\phi_{12}(s,\bi x,\bi x') = |\x - \bgamma_1(s)| - |\x' - \bgamma_2(s)| + \hat{\y}\cdot(\x - \x'),
\label{eq:ff_range_term1}
\end{equation}
\begin{equation}
A_{12}(\omega,s,\bi x,\x') = \frac{J_1(\omega,\bi x)J^*_2(\omega,\bi x')}{(4\pi)^2 |\x-\bgamma_{1}(s)||\x'-\bgamma_{2}(s)||\y|^2}.
\label{eq:forward_A}
\end{equation}
%\begin{equation}
%\Rc = \tilde{\rho} \otimes \tilde{\rho}^*,
%\end{equation}
%and $\otimes$ denotes the tensor product.
In \eqref{eq:ff_range_term1}, $\hat{\y}=\y/|\y|$ denotes the unit vector in the direction $\y$.

We define $\Rc = \tilde{\rho} \otimes \tilde{\rho}^*$ where $\otimes$ denotes the tensor product.
Clearly, $\Rc$ is a rank-one positive semi-definite (PSD) operator with kernel $\rho(\bi x,\bi x')$.
%\begin{equation}
%\rho(\bi x,\bi x') = \tilde{\rho}(\bi x)\tilde{\rho}^*(\bi x').
%\label{eq:kron_prod}
%\end{equation}
We refer to the kernel of $\Rc$ as the \textit{Kronecker scene}.
Since $\Rc$ is rank-one, the scene reflectivity is the only eigenfunction of $\Rc$, and can be recovered exactly from $\Rc$. 
% exactly from this operator.
While $d_{12}(\omega,s)$ is non-linear in scene reflectivity $\tilde{\rho}$, it is linear in $\rho$.
Therefore, our goal is to recover the \textit{Kronecker scene} $\rho$ using $\Fc_{12}[\rho] = d_{12}$.

In the rest of this paper, for notational brevity, we drop the $12$ subscripts from $d_{12}$ and $\mathcal{F}_{12}$. Furthermore, we represent the kernel of $\Fc$ as
\begin{equation}
F(\omega,s,\bi x,\bi x') = \mathrm{e}^{-\mathrm{i}\omega \phi(s,\bi x,\bi x')/c_0} A(\omega,s,\bi x,\bi x').
\end{equation}

\subsection{Discretized Model}

We discretize the scene reflectivity using the standard pixel basis and write
%The resulting data model is then given as
\begin{equation}
d(\omega,s) \approx \sum\limits_{k=1}^N \sum\limits_{k'=1}^{N} F(\omega,s,\bi x_k,\bi x_{k'})\rho(\bi x_k,\bi x_{k'})
\end{equation}
where $\rho(\bi x_k,\bi x_{k'})$ are the entries of an $N^2\times N^2$ rank-one positive semi-definite matrix, defined as $\brho$.
We vectorize this matrix by stacking the columns on top of each other, forming the vector $\bar{\brho}$.
%Note, that in the rest of this paper we drop the $12$ subscript from the discrete model to simplify notation.
%The data is then discretized by uniformly sampling $\o$ and $s$ into $M,P$ points, respectively. The sampled data is index by $(m,p)$ for $m=1,\dots,M$ and $p=1,\dots,P$, forming a matrix.
The data is discretized by sampling uniformly in slow-time and fast-time to form a matrix $\mathbf{d}$ with entries $d(\omega_m,s_p), \ m=1,\dots,M, \ p=1,\dots,P$.
We then vectorize the data matrix by stacking the columns (each slow-time sample) in a vector, defined as $\bar{\mathbf{d}} \in \Cb^{MP}$.
%\begin{equation}
% \mathbf{d} = [d_{12}(\o_1,s_1),\dots,d_{12}(\o_M,s_1),d_{12}(\o_1,s_2),\dots,d_{12}(\o_1,s_P),\dots,d_{12}(\o_M,s_P)]^T.
% \end{equation}
% Similarly, we define
% \begin{equation}
% \mathbf{F} = [ \mathbf{f}^T_{11}, \dots, \mathbf{f}^T_{M,1}, \dots, \mathbf{f}^T_{M,P} ]^T
% \end{equation}
% where $\mathbf{f}_{mp} \in \Cb^{N^4}$ is $F(\o_m,s_p,\bi x_k,\bi x_{k'}) \ k=1,\dots,K,2K,\dots K^2$,.
The forward model is discretized using the same sampling schemes for the scene and data, this results in a matrix $\mathbf{F} \in \Cb^{MP\times N^4}$.
The columns and rows are ordered.
In particular, for the $mp$-th row, the $kk'$-th entry is given as
\begin{equation}
[\mathbf{F}_{mp}]_{kk'} = \mathrm{e}^{-\mathrm{i}\omega_m \phi_{12}(s_p,\bi x_k,\bi x_{k'}, \hat{\y})/c_0} A_{12}(\omega_m,s_p,\bi x_k,\bi x_{k'}).
\label{eq:discrete_kernel}
\end{equation}
 %  A_{ij}(\omega_m,s_p,\bi x_k,\bi x_{k'}).
 Thus, we write
 \begin{equation}
 \bar{\mathbf{d}}=\mathbf{F}\bar{\brho}.
\label{eq:discrete_data_model}
 \end{equation}
We use this discrete data model in applying the LRMR theory, and for numerical implementation.

\section{LRMR based Image Reconstruction}\label{sec:PA_LRMR_theory}

%\subsection{LRMR Optimization Problems}
\subsection{Optimization based Imaging}

Since the unknown discretized Kronecker scene is a rank-one matrix we can pose image formation as the following rank minimization problem:
\begin{equation}
\mathrm{P}_1: \ \underset{\brho}{\text{minimize}} \ J_1(\brho) = \text{rank}(\brho) \quad \text{s.t.} \quad \mathbf{F}\bar{\brho} = \bar{\mathbf{d}}.
\label{eq:rank_min_chap4}
\end{equation}
This problem is non-convex and NP-hard.
The difficulty of this problem has motivated development of alternative heuristics, two of which apply to PSD matrices and are referred to as the log-determinant and trace heuristics \cite{fazel2003,Fazel02,Fazel01}.
More recently, within the context of phaseless imaging, a penalty with a higher degree of regularization that promotes a rank-one solution has been developed \cite{Bayram2017}.

While some of these alternative penalties may perform better as a result of the structure of our problem, we use the nuclear norm heuristic due to its theoretical connection to Problem $\mathrm{P}_1$ \cite{Recht10}.
This penalty is applicable to general rectangular matrices and leads to the following convex problem \cite{Recht10,Fazel01}:
\begin{equation}\label{eq:nuc_norm_min}
\mathrm{P}_2: \ \underset{\brho}{\text{minimize}} \ J_2(\brho) = \|\brho\|_* \quad \text{s.t.} \quad \mathbf{F}\bar{\brho} = \bar{\mathbf{d}},
\end{equation}
where $\|\cdot\|_*$ is the nuclear-norm\footnote{Also known as the Schatten-1 norm or trace norm.}, defined as
\begin{equation}
\| \brho \|_* = \sum_i \sigma_i(\brho),
\end{equation}
with $\sigma_i(\brho)$ being the singular values of $\brho$. From \eqref{eq:nuc_norm_min}, it is clear that the convex relaxation of the rank minimization problem is the $\ell_1$-norm of the singular values, which draws a direct analogy between cardinality minimization and rank minimization \cite{Oymak2011}.

For relatively small scale problems (matrices of size less than $100\times 100$), the nuclear-norm minimization problem in \eqref{eq:nuc_norm_min} can be reformulated as a semi-definite program and solved efficiently using interior point methods \cite{Recht10}.
However, for large scale problems, semi-definite programming fails and alternative algorithms must be used.
Since the nuclear norm is non-smooth, a popular and efficient algorithmic approach is to use forward-backward splitting methods \cite{Combettes2011}.

In our problem, since the unknown is positive semi-definite, an equivalent formulation of $\mathrm{P}_2$ is
\begin{equation}
\mathrm{P}_3: \ \underset{\brho}{\text{minimize}} \ J_3(\brho) = \text{Tr}[\brho] \quad \text{s.t.} \quad \mathbf{F}\bar{\brho}= \bar{\mathbf{d}}, \ \ \brho \succeq 0
\end{equation}
where $\brho \succeq 0$ denotes that $\brho$ is positive semi-definite, and $\text{Tr}[\cdot]$ denotes the trace operator.
% defined as the sum of the diagonal elements.
%The trace objective with the PSD constraint in problem $\mathrm{P}_3$ is equal to the nuclear norm, since the trace of a PSD matrix is the sum of the eigenvalues, which are equal to the singular values.
%We chose to solve this problem instead of the nuclear norm directly as this leads to an algorithm which does not require parameter tuning, for example, a threshold parameter is required in the case of the nuclear norm  \cite{Cai2010}.

To solve $\mathrm{P}_3$ we introduce Lagrange multipliers and use a saddle point method.
However, first we incorporate the PSD constraint to the objective function using the following indicator function
\begin{equation}
i_{C_+}(\brho) = \begin{cases}
& 0 \quad \ \brho \in C_+ \\
& \infty \quad \brho \notin C_+
\end{cases}
\end{equation}
where $C_+$ is the set of PSD matrices.
Thus, $\mathrm{P}_3$ becomes
\begin{equation}
\mathrm{P}_4: \ \underset{\brho}{\text{minimize}} \ J_4(\brho) = \text{Tr}[\brho] + i_{C_+}(\brho) \quad \text{s.t.} \quad \mathbf{F}\bar{\brho}= \bar{\mathbf{d}}.
\label{eq:final_prob}
\end{equation}
%\textcolor[rgb]{0.00,0.00,1.00}{%Next, we introduce a Lagrange multiplier $\bxi$ to incorporate the equality constraint into the objective function and obtain 
%\begin{equation}
%\tilde{L}(\brho,\bxi) = \text{Tr}[\brho] + i_{C_+}(\brho) + \langle \bxi , \mathbf{F}\bar{\brho} - \mathbf{d} \rangle.
%\label{eq:lagrange1}
%\end{equation}
%\begin{equation}
%\mathrm{P}_4: \ \underset{\brho}{\min} \ \text{Tr}[\brho] + \langle \mathbf{y}, \mathbf{F}\bar{\brho} - \mathbf{d} \rangle_{\ell_2} \quad \text{s.t.} \quad \brho \succeq 0,
%\end{equation}
%where $\langle \cdot,\cdot\rangle_{\ell_2}$ is the $\ell_2$ inner product.
%To obtain a simple iterative scheme, it is useful to enforce that the solution have minimum norm.
% i.e. the Lagrangian becomes
To obtain a simple iterative scheme, it is useful to enforce that the solution have minimum norm.
Thus, we modify $\mathrm{P}_4$ as
\begin{equation}
\mathrm{P}_5: \ \underset{\brho}{\text{minimize}} \ J_5(\brho) = \text{Tr}[\brho] + i_{C_+}(\brho) + \frac{1}{\lambda}\|\brho\|^2_F \quad \text{s.t.} \quad \mathbf{F}\bar{\brho}= \bar{\mathbf{d}}.
\label{eq:final_prob2}
\end{equation}
Next, we introduce a Lagrange multiplier $\bxi$ to incorporate the equality constraint into the objective function and obtain 
\begin{equation}
L(\brho,\bxi) = \text{Tr}[\brho] + i_{C_+}(\brho) + \frac{1}{\lambda}\| \brho \|^2_F + \langle \bxi , \mathbf{F}\bar{\brho} - \bar{\mathbf{d}} \rangle.
\label{eq:lagrange2}
\end{equation}
%While \eqref{eq:lagrange2} is no longer the same as \eqref{eq:lagrange1}, its solution is equal to that of \eqref{eq:lagrange1} as $\lambda \rightarrow \infty$.
While $\mathrm{P}_5$ is no longer the same as $\mathrm{P}_4$, its solution is equal to that of $\mathrm{P}_4$ as $\lambda \rightarrow \infty$.
%}
This follows from Theorem 3.1 in \cite{Cai2010}.
% can be adapted for \eqref{eq:final_prob} in a straightforward fashion.
%We show numerically, that finite $\lambda$ is sufficient for the problem to converge to the true solution.
In our numerical simulations, we show that the solution to \eqref{eq:lagrange2} converges to the true solution for a sufficiently large $\lambda$.

Under strong duality assumption, the solution to \eqref{eq:final_prob2} can be obtained by solving the dual problem, using the Uzawa iterative procedure \cite{Cai2010}:
first update $\brho$ as the minimizer of \eqref{eq:lagrange2} for fixed $\bxi$, then update the Lagrange multiplier $\bxi$ with a gradient ascent update.
%the derivative of the Lagrangian with respect to $\bxi$ for fixed $\brho$
The final iterative scheme then becomes:
\begin{equation}
\begin{aligned}
\begin{cases}
\brho^{k} \ \ = & \Pc_{C_+} ( \text{Rs}(\mathbf{F}^H\bxi^k) - \lambda\mathbf{I} ) \\
\bxi^{k+1} = & \bxi^k + \beta_k (\mathbf{F}\bar{\brho}^k-\bar{\mathbf{d}})
\end{cases}
\end{aligned}
\label{eq:opt_alg}
\end{equation}
where $\Pc_{C_+}(\cdot)$ is a projection onto the PSD cone, $\beta_k$ is the step size, $\mathbf{I}$ is an identity matrix, $\bxi^0 = \mathbf{0}$, $\brho=\mathbf{0}$, and $\text{Rs}(\bar{\brho})=\brho$.

The convergence of \eqref{eq:opt_alg} to the solution of \eqref{eq:final_prob} is stated below in Theorem \ref{thm:converge}.
\begin{theorem}\label{thm:converge}
Suppose that the sequence of step sizes obeys $0 < \inf \beta_k \leq \beta_k \leq \sup \beta_k < 1/\alpha$, where $\alpha$ is the Lipshitz constant\footnote{Note that the $\alpha$ can be calculated as the largest singular value of $\mathbf{F}$.} of $f(\brho) = \mathbf{F}\bar{\brho}-\bar{\mathbf{d}}$. Then the sequence $\{ \brho^k \}$ generated by \eqref{eq:opt_alg} converges to the unique solution of $\mathrm{P}_5$, defined in \eqref{eq:final_prob}.
\end{theorem}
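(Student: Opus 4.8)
The plan is to exploit the strong convexity introduced by the Frobenius term $\tfrac{1}{\lambda}\|\brho\|_F^2$ in $\mathrm{P}_5$ and to recognize the scheme \eqref{eq:opt_alg} as gradient ascent on the now-differentiable dual function, in direct analogy with the singular value thresholding analysis of \cite{Cai2010}. First I would record that $J_5$ is strongly convex on the closed convex cone $C_+$, so $\mathrm{P}_5$ admits a \emph{unique} minimizer $\brho^\star$; since the constraint $\mathbf{F}\bar{\brho}=\bar{\mathbf{d}}$ is affine and is satisfied by a feasible PSD point (the true Kronecker scene), strong duality holds and a saddle point $(\brho^\star,\bxi^\star)$ of the Lagrangian $L$ in \eqref{eq:lagrange2} exists.

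Next I would justify the two lines of \eqref{eq:opt_alg}. For fixed $\bxi^k$, the inner minimization $\min_{\brho} L(\brho,\bxi^k)$ is a strongly convex program over the PSD cone; completing the square in the terms $\langle \mathbf{I}+\mathrm{Rs}(\mathbf{F}^H\bxi^k),\brho\rangle+\tfrac{1}{\lambda}\|\brho\|_F^2$ identifies its unique solution as the Euclidean projection onto $C_+$ of a shifted matrix, which is exactly the first line of \eqref{eq:opt_alg}. Writing $\brho(\bxi)$ for this minimizer, an envelope (Danskin) argument then shows that the dual function $g(\bxi)=\min_{\brho}L(\brho,\bxi)$ is concave and continuously differentiable with $\nabla g(\bxi)=\mathbf{F}\bar{\brho}(\bxi)-\bar{\mathbf{d}}$, so the second line of \eqref{eq:opt_alg} is precisely a gradient-ascent step on $g$.

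Convergence then reduces to the standard theory of gradient ascent on a smooth concave function. I would bound the Lipschitz constant of $\nabla g$ using the nonexpansiveness of the PSD projection $\Pc_{C_+}$ and the isometry $\mathrm{Rs}$: since $\bar{\brho}(\bxi)$ is a nonexpansive image of $\mathbf{F}^H\bxi$, the map $\bxi\mapsto\nabla g(\bxi)$ is Lipschitz with a constant controlled by $\alpha=\|\mathbf{F}\|$ (the largest singular value, as in the footnote). The hypothesis $0<\inf\beta_k\le\beta_k\le\sup\beta_k<1/\alpha$ makes the gradient-ascent operator averaged (firmly nonexpansive), so by a Krasnoselskii--Mann/Opial argument the iterates $\bxi^k$ converge to a dual maximizer $\bxi^\star$. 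Continuity of $\brho(\cdot)$ together with the projection formula then gives $\brho^k=\brho(\bxi^k)\to\brho(\bxi^\star)=\brho^\star$, which by strong duality is the unique solution of $\mathrm{P}_5$.

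The main obstacle I anticipate is bookkeeping rather than conceptual. First, matching the precise constant in the step-size bound: the Lipschitz modulus of $\nabla g$ scales with $\alpha^2$ and with $\lambda$, so recovering the stated threshold $1/\alpha$ requires carefully tracking the normalization and sign conventions implicit in \eqref{eq:opt_alg} (in particular the shift by $\lambda\mathbf{I}$ and the scaling of the regularizer). Second, I must handle the complex, Hermitian-symmetric nature of $\brho$ together with the reshaping operator $\mathrm{Rs}$ so that the projection step and the (real) inner product $\langle\cdot,\cdot\rangle$ are consistently defined and the envelope argument remains valid. A secondary point to confirm is that $\mathrm{P}_5$ is the intended target, i.e.\ that its solution approaches that of $\mathrm{P}_4$ as $\lambda\to\infty$ (Theorem 3.1 of \cite{Cai2010}), so that the recovered limit $\brho^\star$ is the object of interest.
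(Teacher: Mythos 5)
Your proposal is correct in substance but follows a genuinely different route from the paper's. You both begin identically: Slater's condition (feasibility of the true Kronecker scene) gives strong duality, and the $\tfrac{1}{\lambda}\|\brho\|_F^2$ term gives strong convexity of $J_5$, hence a unique minimizer. Where you diverge is in how convergence is extracted. You smooth the dual: via a Danskin/envelope argument the dual function $g(\bxi)$ is concave with Lipschitz gradient $\nabla g(\bxi)=\mathbf{F}\bar{\brho}(\bxi)-\bar{\mathbf{d}}$, the second line of \eqref{eq:opt_alg} is gradient ascent on $g$, and a Krasnoselskii--Mann/averaged-operator argument gives $\bxi^k\to\bxi^\star$, after which continuity of the primal map $\brho(\cdot)$ transfers convergence to $\brho^k\to\brho^\star$. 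The paper instead follows Cai--Cand\`es--Shen directly and derives the one-line Fej\'er-type inequality $\|\bxi^k-\bxi^\star\|_2^2\le\|\bxi^{k-1}-\bxi^\star\|_2^2-(2\beta_k-\beta_k^2\alpha^2)\|\bar{\brho}^\star-\bar{\brho}^k\|_F^2$ from first-order optimality, strong convexity, and Lipschitz continuity of $f$; summing immediately forces $\|\brho^k-\brho^\star\|_F\to 0$ without ever establishing that the dual iterates converge. The paper's route is shorter and delivers primal convergence directly; yours is more modular and makes the mechanism (dual smoothness induced by primal strong convexity) explicit, at the cost of an extra continuity step. Your flagged concern about the step-size constant is well taken and is not resolved by the paper either: the paper's own inequality requires $2\beta_k-\beta_k^2\alpha^2>0$, i.e.\ $\beta_k<2/\alpha^2$, which does not match the stated hypothesis $\sup\beta_k<1/\alpha$ (nor the appendix's claim that $\beta_k<\alpha$ suffices), so the normalization discrepancy you anticipated is present in the source and not a defect of your argument.
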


\begin{proof}
See Appendix \ref{sec:converge_proof_sketch}.
\end{proof}

\subsection{Recovery Guarantees}

We now provide the LRMR theory that states the conditions under which the rank-one solution to $\mathrm{P}_1$ is unique and the same as the solution to $\mathrm{P}_2$ \cite{Recht10}.
These recovery guarantees are developed under the assumption that the measurement matrix or forward model satisfies
%the restricted isometry property (RIP) generalized to low-rank matrices \cite{Recht10}.
the RIP generalized to low-rank matrices given below \cite{Recht10}.
%This property was then generalized to the case of low-rank matrices in \cite{Recht10}.
%Generally speaking the restricted isometry constant (RIC) quantifies how close an operator is to an isometry.
%The generalization of the RIP to matrices is:

%This generalization is given in the definition of the restricted isometry property stated in Definition \ref{def:rip_cond} \cite{Recht10,Chai11}.
%Simply put the forward model should be approximately an isometry in order to leverage the results of LRMR.
%The restricted isometry constant (RIC) is a measure of how close a matrix is to an isometry when restricted to a subset of matrices with rank at most $r$.
%The relevant theorems generally state that when the RIC is less than unity there is a unique solution to the rank minimization problem $\mathrm{P}_1$, and for an appropriately smaller constant this solution is the same as that of $\mathrm{P}_2$. The exact conditions are stated in Theorems \ref{thm:recht_thm1},\ref{thm:recht_thm2},\ref{thm:chai_thm}.
%the null space is trivial, and the uniqueness of only one matrix satisfying the linear equation leads to the connections between rank minimization and the nuclear norm.

\begin{definition}\textbf{Restricted Isometry Property}\label{def:rip_cond}
Let $\Ac: \Rb^{m\times n} \rightarrow \Rb^p$ be a linear operator. Assume, without loss of generality, $m<n$. For every $1 \leq r \leq m$, define the $r$-restricted isometry constant to be the smallest $\delta_r$ depending on $\Ac$, such that
\begin{equation}
(1-\delta_r)\|\mathbf{X}\|_F \leq \| \Ac[\mathbf{X}] \|_2 \leq (1+\delta_r)\|\mathbf{X}\|_F
\end{equation}
holds for all matrices $\mathbf{X}$, of rank of at most $r$, where $\|\mathbf{X}\|_F = \sqrt{\text{Tr}[\mathbf{X}^H\mathbf{X}]}$ is the Frobenius norm.
\end{definition}

The restricted isometry constant (RIC), $\delta_r$, can be viewed as a measure of how close a linear operator is to an isometry when restricted to a domain of matrices with rank at most $r$.
The relevant theorems generally state that when the RIC is less than unity there is a unique solution to the rank minimization problem, $\mathrm{P}_1$, and for an appropriately smaller $\delta_r$ this solution is the same as that of $\mathrm{P}_2$.

%\begin{equation}
%(1-\delta_r)\|\brho\|_F \leq \| \Fc[\brho] \|_2 \leq (1+\delta_r)\|\brho\|_F.
%\end{equation}
%
%$$ \mathbf{d} = \Fc[\brho] $$

We begin our review of LRMR theory by considering the data model $\Ac[\mathbf{X}_0]=\mathbf{b}$ where $\text{rank}(\mathbf{X}_0)=r$.
Two recovery results, based on the assumption that the underlying forward model satisfyies the RIP condition with small enough RIC, are given in Theorems \ref{thm:recht_thm1} and \ref{thm:recht_thm2} \cite{Recht10}.

\begin{theorem}\label{thm:recht_thm1}
Let $\Ac(\mathbf{X}_0)=\mathbf{b}$ and suppose $\delta_{2r} < 1$ for some integer $r = \mathrm{rank}(\mathbf{X}_0) \geq 1$. Then, $\mathbf{X}_0$ is the only matrix of rank at most $r$ satisfying the matrix equation $\Ac(\mathbf{X}) = \mathbf{b}$, and hence the only solution of $\mathrm{P}_1$.
\end{theorem}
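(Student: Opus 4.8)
The plan is to prove uniqueness by a short contradiction argument that exploits the subadditivity of rank together with the lower RIP inequality. First I would suppose, for contradiction, that there exists a second matrix $\mathbf{X}_1 \neq \mathbf{X}_0$ of rank at most $r$ that also satisfies $\Ac(\mathbf{X}_1) = \mathbf{b}$. Forming the difference $\mathbf{Z} = \mathbf{X}_0 - \mathbf{X}_1$, linearity of $\Ac$ immediately gives $\Ac(\mathbf{Z}) = \mathbf{b} - \mathbf{b} = 0$.

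The key structural observation is that $\mathbf{Z}$ lies in exactly the class of matrices on which the $2r$-restricted isometry constant is defined. Since rank is subadditive, $\text{rank}(\mathbf{Z}) \leq \text{rank}(\mathbf{X}_0) + \text{rank}(\mathbf{X}_1) \leq r + r = 2r$. This is precisely why the hypothesis is phrased in terms of $\delta_{2r}$ rather than $\delta_r$: the difference of two rank-$r$ feasible points need not have rank $r$, but it can never exceed rank $2r$. Applying Definition \ref{def:rip_cond} to $\mathbf{Z}$ then yields the lower bound $(1-\delta_{2r})\|\mathbf{Z}\|_F \leq \|\Ac(\mathbf{Z})\|_2$.

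Combining the two facts closes the argument: the right-hand side is $\|\Ac(\mathbf{Z})\|_2 = 0$, while the hypothesis $\delta_{2r} < 1$ makes the coefficient $1 - \delta_{2r}$ strictly positive. Hence $\|\mathbf{Z}\|_F \leq 0$, which forces $\mathbf{Z} = 0$, i.e. $\mathbf{X}_1 = \mathbf{X}_0$, contradicting our assumption. Therefore $\mathbf{X}_0$ is the unique matrix of rank at most $r$ satisfying $\Ac(\mathbf{X}) = \mathbf{b}$.

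Finally I would translate uniqueness in the rank-$\leq r$ class into optimality for $\mathrm{P}_1$. Because $\mathbf{X}_0$ is feasible with rank $r$, the optimal value of $\mathrm{P}_1$ is at most $r$; any minimizer therefore has rank at most $r$ and satisfies the constraint, so by the uniqueness just established it must coincide with $\mathbf{X}_0$. In particular no feasible point of rank strictly less than $r$ can exist, since it would have to equal $\mathbf{X}_0$, whose rank is $r$; thus the minimum rank equals $r$ and $\mathbf{X}_0$ is the unique solution of $\mathrm{P}_1$. There is no genuine analytic obstacle here: the only points that must be handled carefully are the rank-subadditivity bound that legitimizes invoking $\delta_{2r}$, and the strict positivity of $1-\delta_{2r}$, which is exactly where the hypothesis $\delta_{2r}<1$ enters.
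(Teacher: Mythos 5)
Your proposal is correct and is essentially the argument the paper relies on: the paper defers to Theorem 3.2 of Recht, Fazel, and Parrilo, whose proof is exactly this difference-of-feasible-points argument using rank subadditivity ($\mathrm{rank}(\mathbf{X}_0-\mathbf{X}_1)\leq 2r$) and the lower RIP bound with $\delta_{2r}<1$ to force $\mathbf{X}_0-\mathbf{X}_1=0$. Nothing further is needed.
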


\begin{proof}
See proof of Theorem 3.2 in \cite{Recht10}.
\end{proof}

\begin{theorem} \label{thm:recht_thm2}
Suppose $\mathrm{r} = rank(\mathbf{X}_0) \geq 1$ and $\delta_{5r} \leq 1/10$. Then, $\mathbf{X}_0$ is the unique
solution to $\mathrm{P}_2$.
\end{theorem}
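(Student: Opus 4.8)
The plan is to follow the standard null-space argument for nuclear-norm recovery. Suppose, for contradiction, that $\mathbf{X}_0 + \mathbf{Z}$ is an optimal solution to $\mathrm{P}_2$ with $\mathbf{Z} \neq \mathbf{0}$. Since both $\mathbf{X}_0$ and $\mathbf{X}_0 + \mathbf{Z}$ satisfy the affine constraint $\Ac(\mathbf{X}) = \mathbf{b}$, the perturbation lies in the null space of the forward map, i.e. $\Ac(\mathbf{Z}) = \mathbf{0}$. First I would decompose $\mathbf{Z}$ using the column space $U$ and row space $V$ of $\mathbf{X}_0$ taken from its SVD $\mathbf{X}_0 = U\Sigma V^*$: writing $P_U$ and $P_V$ for the orthogonal projections onto $U$ and $V$, set $\mathbf{Z}_c = (I - P_U)\mathbf{Z}(I - P_V)$ for the part living entirely off the row/column spaces of $\mathbf{X}_0$, and $\mathbf{Z}_0 = \mathbf{Z} - \mathbf{Z}_c$ for the remainder, which has rank at most $2r$. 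Because $\mathbf{Z}_c$ shares no row or column space with $\mathbf{X}_0$, the nuclear norm decouples as $\|\mathbf{X}_0 + \mathbf{Z}_c\|_* = \|\mathbf{X}_0\|_* + \|\mathbf{Z}_c\|_*$. Combining this with the optimality inequality $\|\mathbf{X}_0 + \mathbf{Z}\|_* \leq \|\mathbf{X}_0\|_*$ and the triangle inequality yields the cone condition $\|\mathbf{Z}_c\|_* \leq \|\mathbf{Z}_0\|_*$.

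Next I would split the high-rank part $\mathbf{Z}_c$ into a sum $\mathbf{Z}_1 + \mathbf{Z}_2 + \cdots$ of mutually orthogonal blocks, each of rank at most $3r$, ordered by decreasing singular-value magnitude: $\mathbf{Z}_1$ collects the $3r$ largest singular values of $\mathbf{Z}_c$, $\mathbf{Z}_2$ the next $3r$, and so on. The key estimate is that every singular value in block $j$ is no larger than the average of those in block $j-1$, which gives $\|\mathbf{Z}_{j}\|_F \leq (3r)^{-1/2}\|\mathbf{Z}_{j-1}\|_*$ for $j \geq 2$. Summing over $j$ and invoking the cone condition together with the rank-$2r$ bound $\|\mathbf{Z}_0\|_* \leq \sqrt{2r}\,\|\mathbf{Z}_0\|_F$ produces the controlling estimate $\sum_{j\geq 2}\|\mathbf{Z}_j\|_F \leq \sqrt{2/3}\,\|\mathbf{Z}_0\|_F$.

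The final step applies the RIP from both sides. Since $\Ac(\mathbf{Z})=\mathbf{0}$, I would write $\Ac(\mathbf{Z}_0 + \mathbf{Z}_1) = -\sum_{j\geq 2}\Ac(\mathbf{Z}_j)$ and bound the two sides separately. The matrix $\mathbf{Z}_0 + \mathbf{Z}_1$ has rank at most $5r$, so the lower RIP bound together with the orthogonality of $\mathbf{Z}_0$ and $\mathbf{Z}_1$ gives $\|\Ac(\mathbf{Z}_0+\mathbf{Z}_1)\|_2 \geq (1-\delta_{5r})\|\mathbf{Z}_0\|_F$; each $\mathbf{Z}_j$ has rank at most $3r$, so the upper RIP bound gives $\|\Ac(\mathbf{Z}_j)\|_2 \leq (1+\delta_{3r})\|\mathbf{Z}_j\|_F$. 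Chaining these with the block estimate and the monotonicity $\delta_{3r}\leq\delta_{5r}$ yields $(1-\delta_{5r})\|\mathbf{Z}_0\|_F \leq (1+\delta_{5r})\sqrt{2/3}\,\|\mathbf{Z}_0\|_F$. Whenever $\delta_{5r} \leq 1/10$ this inequality forces $\|\mathbf{Z}_0\|_F = 0$; the cone condition then gives $\mathbf{Z}_c = \mathbf{0}$ as well, so $\mathbf{Z} = \mathbf{0}$, contradicting $\mathbf{Z}\neq\mathbf{0}$ and establishing that $\mathbf{X}_0$ is the unique minimizer of $\mathrm{P}_2$.

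I expect the main obstacle to be the bookkeeping in the block decomposition of $\mathbf{Z}_c$ and verifying that the numerical constant closes: the threshold produced by the argument is $\delta_{5r} < (1-\sqrt{2/3})/(1+\sqrt{2/3}) \approx 0.101$, so the stated bound $1/10$ is essentially tight, and the particular choice of block size $3r$ (rather than $r$ or $4r$) is precisely what places the constant on the favorable side. Getting the orthogonality relations and the nuclear-to-Frobenius norm conversions exactly right is where the delicacy lies; the RIP itself enters only at the very last step.
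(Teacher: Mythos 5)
Your argument is correct and is precisely the proof the paper points to: the paper does not prove this theorem itself but cites Theorem 3.3 of \cite{Recht10}, and your null-space decomposition $\mathbf{Z}=\mathbf{Z}_0+\mathbf{Z}_c$, the cone condition $\|\mathbf{Z}_c\|_*\leq\|\mathbf{Z}_0\|_*$, the rank-$3r$ block splitting with $\sum_{j\geq 2}\|\mathbf{Z}_j\|_F\leq\sqrt{2/3}\,\|\mathbf{Z}_0\|_F$, and the final two-sided RIP comparison are exactly the steps of that proof, with the constant closing because $1/10 < (1-\sqrt{2/3})/(1+\sqrt{2/3})\approx 0.101$. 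No gaps.
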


\begin{proof}
See proof of Theorem 3.3 in \cite{Recht10}.
\end{proof}

%Theorem \ref{thm:recht_thm1} states that for $\delta_{2r}<1$, the solution $\mathbf{X}_*$ to $\mathrm{P}_1$ in \eqref{eq:rank_min_chap4} is unique and equal to the exact solution $\mathbf{X}_0$.
Theorem \ref{thm:recht_thm1} states that for $\delta_{2r}<1$, the solution to $\mathrm{P}_1$ in \eqref{eq:rank_min_chap4} is unique and equal to the exact solution $\mathbf{X}_0$.
%The second result, given in 
Theorem \ref{thm:recht_thm2} states that for $\delta_{5r} \leq 1/10$, $\mathbf{X}_0$ is the unique and exact solution to $\mathrm{P}_2$ in \eqref{eq:nuc_norm_min}.
Since $\delta_r < \delta_{r'}$ when $r<r'$, Theorem \ref{thm:recht_thm2} implies Theorem \ref{thm:recht_thm1}.
From this, we conclude that when $\mathbf{F}$ satisfies RIP with $\delta_5<1/10$, the solutions of $\mathrm{P}_1$ and $\mathrm{P}_2$ are the same and equal to the true Kronecker scene $\brho^{\star}$.

In \cite{Chai11} the RIC condition in Theorem \ref{thm:recht_thm2} is improved and exact recovery is shown for a rank-one matrix when $\delta_2<1$.
We summarize this result in the following Theorem.

\begin{theorem}\label{thm:chai_thm}
Let $\Ac(\mathbf{X}_0)=\mathbf{b}$ with $\mathrm{rank}(\mathbf{X}_0) = 1$. Assume that $\delta_2 < 1$. Then $\mathbf{X}_0$ is the 
unique solution to $\mathrm{P}_2$. 
%unique solution to the nuclear norm heuristic $\mathrm{P}_2$. 
%\textcolor[rgb]{1.00,0.00,0.00}{Eric, is it "nuclear norm heuristic $\mathrm{P}_2$" or "nuclear norm heuristic in $\mathrm{P}_2$"?}
%\textcolor[rgb]{0.00,0.00,1.00}{Professor Yazici, I removed the "nuclear norm hueristic" part. I don't think it sounds very clear in either case and is not important. It is clear what problem $\mathrm{P}_2$ is.}
% given in \eqref{eq:nuc_norm_min}.
\end{theorem}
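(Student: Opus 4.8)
The plan is to prove the stronger statement that $\mathbf{X}_0$ is the \emph{unique} minimizer of $\mathrm{P}_2$ by showing that every nonzero admissible perturbation strictly increases the nuclear norm. Let $\mathbf{X}_1$ be any matrix with $\Ac(\mathbf{X}_1)=\mathbf{b}$ and set $\mathbf{H}=\mathbf{X}_1-\mathbf{X}_0$, so that $\Ac(\mathbf{H})=0$, i.e. $\mathbf{H}\in\Nc(\Ac)$. It then suffices to establish that $\|\mathbf{X}_0+\mathbf{H}\|_*>\|\mathbf{X}_0\|_*$ whenever $\mathbf{H}\neq 0$.

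First I would exploit the rank-one structure of the target. Writing the thin SVD $\mathbf{X}_0=\sigma\,u v^{H}$, I introduce the tangent space $T=\{\,u y^{H}+x v^{H}: x,y\,\}$ to the rank-one manifold at $\mathbf{X}_0$, with orthogonal projector $P_T$ and complementary projector $P_{T^\perp}$, and I split $\mathbf{H}=\mathbf{H}_T+\mathbf{H}_{T^\perp}$ with $\mathbf{H}_T:=P_T(\mathbf{H})$ and $\mathbf{H}_{T^\perp}:=P_{T^\perp}(\mathbf{H})$. The decisive features are that every element of $T$ — in particular $\mathbf{H}_T$ — has rank at most $2$, while $\mathbf{H}_{T^\perp}$ has column and row spaces orthogonal to those of $\mathbf{X}_0$. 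This is exactly why only the rank-$2$ isometry constant $\delta_2$ should enter the hypothesis, in contrast to the higher-order constant $\delta_{5r}$ of Theorem \ref{thm:recht_thm2}.

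Next I would combine two ingredients. Because $\mathbf{X}_0$ and $\mathbf{H}_{T^\perp}$ act on orthogonal subspaces, the nuclear norm is additive on them, $\|\mathbf{X}_0+\mathbf{H}_{T^\perp}\|_*=\|\mathbf{X}_0\|_*+\|\mathbf{H}_{T^\perp}\|_*$, so the reverse triangle inequality gives
\[
\|\mathbf{X}_0+\mathbf{H}\|_*\ \ge\ \|\mathbf{X}_0\|_*+\|\mathbf{H}_{T^\perp}\|_*-\|\mathbf{H}_T\|_* .
\]
Hence it is enough to certify the null-space-type inequality $\|\mathbf{H}_{T^\perp}\|_*>\|\mathbf{H}_T\|_*$ for all nonzero $\mathbf{H}\in\Nc(\Ac)$. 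To bring in the RIP I use $\Ac(\mathbf{H})=0$ to write $\Ac(\mathbf{H}_T)=-\Ac(\mathbf{H}_{T^\perp})$; since $\mathbf{H}_T$ has rank at most $2$, Definition \ref{def:rip_cond} supplies the lower bound $(1-\delta_2)\|\mathbf{H}_T\|_F\le\|\Ac(\mathbf{H}_T)\|_2=\|\Ac(\mathbf{H}_{T^\perp})\|_2$, which couples the size of the tangent part to the action of $\Ac$ on the complementary part.

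The main obstacle — and the place where the sharp constant $\delta_2<1$ is genuinely used — is controlling the complementary component $\mathbf{H}_{T^\perp}$: it may be of arbitrarily high rank, so the RIP cannot be applied to it directly, and the naive blocking of $\mathbf{H}_{T^\perp}$ into rank-$2$ pieces (as in \cite{Recht10}) only reproduces the weaker higher-order condition. Following \cite{Chai11}, the way around this is to exploit the rank-one target together with the near-orthogonality of $\Ac$ between the rank-$\le 2$ tangent part and the complementary part, which lets one compare $\|\Ac(\mathbf{H}_{T^\perp})\|_2$ against $\|\mathbf{H}_{T^\perp}\|$ without inflating the isometry order past rank $2$, and then rule out the equality case of the triangle inequality to upgrade the bound to the strict inequality $\|\mathbf{H}_{T^\perp}\|_*>\|\mathbf{H}_T\|_*$. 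I would carry out this comparison last, since it is the only step in which the improvement over Theorem \ref{thm:recht_thm2} is actually invoked; once it is in place, the displayed inequality yields $\|\mathbf{X}_0+\mathbf{H}\|_*>\|\mathbf{X}_0\|_*$ for every $\mathbf{H}\neq 0$, establishing both optimality and uniqueness of $\mathbf{X}_0$ for $\mathrm{P}_2$.
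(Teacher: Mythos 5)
The paper offers no argument of its own for this theorem --- its ``proof'' is a pointer to Theorem~3.4 of \cite{Chai11} --- so the only question is whether your sketch would actually close the gap, and it does not. You set up the standard scaffolding correctly: the tangent-space splitting $\mathbf{H}=\mathbf{H}_T+\mathbf{H}_{T^\perp}$ with $\mathrm{rank}(\mathbf{H}_T)\le 2$, the additivity $\|\mathbf{X}_0+\mathbf{H}_{T^\perp}\|_*=\|\mathbf{X}_0\|_*+\|\mathbf{H}_{T^\perp}\|_*$, and the reduction to the null-space inequality $\|\mathbf{H}_{T^\perp}\|_*>\|\mathbf{H}_T\|_*$ for all nonzero $\mathbf{H}\in\Nc(\Ac)$. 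But the step you yourself flag as ``the main obstacle'' is the entire content of the theorem, and your resolution of it is an assertion, not an argument: you never produce the inequality relating $\|\Ac(\mathbf{H}_{T^\perp})\|_2$ to $\|\mathbf{H}_{T^\perp}\|_*$ that would have to be combined with $(1-\delta_2)\|\mathbf{H}_T\|_F\le\|\Ac(\mathbf{H}_{T^\perp})\|_2$, nor do you explain what ``ruling out the equality case'' means once that comparison is in hand. Deferring precisely this step to \cite{Chai11} leaves the proof empty at its only essential point.

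More importantly, the route as you describe it cannot reach the stated constant, because nowhere do you use the Hermitian positive semidefinite structure of the problem. For unstructured nuclear-norm minimization, $\delta_2<1$ is not a sufficient condition even for rank-one targets --- the sharp RIP threshold in that setting (Cai and Zhang) is $\delta_{2r}<1/\sqrt{2}$, with counterexamples just above it --- so the constant $1$ in the hypothesis must be supplied by structure your argument never touches. In \cite{Chai11} the lifted unknown is $\tilde\rho\otimes\tilde\rho^*$ and the feasible perturbations keep the iterate PSD; consequently $\|\cdot\|_*=\Tr(\cdot)$ on the feasible set and the complementary block $\mathbf{H}_{T^\perp}=P_{u^\perp}\mathbf{H}P_{u^\perp}$ is itself PSD, so its eigendecomposition consists of rank-one PSD pieces each controlled by $\delta_1\le\delta_2$, which is what permits bounding $\|\Ac(\mathbf{H}_{T^\perp})\|_2$ by a multiple of $\Tr(\mathbf{H}_{T^\perp})=\|\mathbf{H}_{T^\perp}\|_*$ without the blocking argument of \cite{Recht10} and hence without inflating the isometry order. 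A correct write-up must therefore work with the PSD-constrained formulation ($\mathrm{P}_3$/$\mathrm{P}_4$, which is where the paper actually invokes this theorem) and let the positivity of $\mathbf{H}_{T^\perp}$ and the trace identity carry the key estimate; as written, your proposal has a genuine gap at that step and, stripped of the PSD structure, would be attempting to prove something false at the claimed level of generality.
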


\begin{proof}
See proof of Theorem 3.4 in \cite{Chai11}.
\end{proof}

\section{Performance Guarantees for LRMR based Passive SAR Imaging}\label{sec:PA_analysis_gaurantees}

%The hypothesis of this analysis is that we can characterize and predict the performance of the LRMR passive SAR algorithm by considering the effect of different system parameters using the RIP.
In this section, we prove an asymptotic recovery result for LRMR-based passive SAR. 
%Our result assumes a circular SAR scenario, using broadband antennas and a transmit pulse with a flat spectrum.
%We assume the ranges of the receivers and transmitters are much larger than the size of the scene.
%An illustration of the imaging configuration is shown in Figure \ref{fig:analysis_fig}.
In summary, our result shows that exact reconstruction is possible if the target spacing is larger than a fraction of the wavelength corresponding to center frequency. Lemma \ref{thm:int_est} states the precise conditions required for exact recovery. 
%, or more precisely a slightly larger quantity based on the bandwidth, scene-size, and distance between receivers.
%The conditions required for exact recovery are stated in Lemma \ref{thm:int_est}.
%certain conditions on the imaging system and configuration are satisfied, specifically those given in Assumption \ref{assum:performance_assumptions} and Lemma \ref{thm:suff_cond}.
Theorem \ref{thm:gaurantee} then shows that the RIC is small enough to satisfy the conditions stated in Theorems \ref{thm:recht_thm1}, \ref{thm:recht_thm2} and \ref{thm:chai_thm}, proving that the solution to $\mathrm{P}_2$ is unique and the same as the solution to $\mathrm{P}_1$.

%\begin{figure}
%\center
%\includegraphics[scale=.7]{}
%\caption{A depiction of the circular imaging configuration and key terms used in Lemmas \ref{thm:suff_cond} and \ref{thm:int_est}. The scene consists of two targets, one located at the origin. The targets are separated by $\Delta_x$, which is the minimum distance for exact reconstruction to be possible. The scene size is $L_x\times L_x$, and the distance from any point on the trajectory to the scene center is $L_g= |\bgamma(s)|$. }
%\label{fig:analysis_fig}
%\end{figure}

To obtain RIC, we start with evaluating $\|\mathbf{F}\brho\|_2^2$: % in terms of the phase $\phi_{12}$ and amplitude $A_{12}$ defined in \eqref{eq:ff_range_term1} and \eqref{eq:forward_A}.
%Since the RIP holds for finite dimensional operators, we begin with the discretized formulation of the problem in \eqref{eq:discrete_data_model}, and evaluate $\|\mathbf{F}\brho\|_2^2$ as follows:
%\begin{equation}\label{eq:l2_norm_data}
%\begin{aligned}
%\|\mathbf{F}\brho\|^2_2 = & \sum\limits_{m=1}^{M}\sum\limits_{p=1}^{P} \left\vert \sum\limits_{k=1}^N \sum_{k'=1}^N F_{12}(\o_m,s_p,\bi x_k,\bi x_{k'})\rho(\bi x_k,\bi x_{k'})\right\vert^2 \\
%& \sum\limits_{m=1}^{M}\sum\limits_{p=1}^{P} \left( \sum\limits_{k,k'}^{N^2} F_{12}(\o_m,s_p,\bi x_k,\bi x_{k'})\rho(\bi x_k,\bi x_{k'}) \right) \left( \sum\limits_{l,l'}^{N^2} F_{12}(\o_m,s_p,\bi x_l,\bi x_{l'})\rho(\bi x_l,\bi x_{l'}) \right)^* \\
%= & \sum\limits_{m=1}^{M}\sum\limits_{p=1}^{P} \left(\sum\limits_{k,k'}^{N^2} F_{12}(\o_m,s_p,\bi x_k,\bi x_{k'}) \rho(\bi x_k,\bi x_{k'}) \right) \left(\sum\limits_{l,l'}^{N^2} F^*_{12}(\o_m,s_p,\bi x_l,\bi x_{l'})\rho^*(\bi x_l,\bi x_{l'} \right) \\
%= & \sum\limits_{k,k'=1}^{N^2} \sum\limits_{l,l'=1}^{N^2} \left( \sum\limits_{m=1}^{M}\sum\limits_{p=1}^{P}  F_{12}(\o_m,s_p,\bi x_k,\bi x_{k'}) F^*_{12}(\o_m,s_p,\bi x_l,\bi x_{l'}) \right) \rho(\bi x_k,\bi x_{k'}) \rho^*(\bi x_l,\bi x_{l'}).
%\end{aligned}
%\end{equation}
\begin{equation}\label{eq:l2_norm_data}
\begin{aligned}
\|\mathbf{F}\brho\|^2_2 = & \sum\limits_{m=1}^{M}\sum\limits_{p=1}^{P} \left\vert \sum\limits_{k=1}^N \sum_{k'=1}^N F(\omega_m,s_p,\bi x_k,\bi x_{k'})\rho(\bi x_k,\bi x_{k'})\right\vert^2 \\
= & \sum\limits_{k,k'=1}^{N^2} \sum\limits_{l,l'=1}^{N^2} \left( \sum\limits_{m=1}^{M}\sum\limits_{p=1}^{P}  F(\omega_m,s_p,\bi x_k,\bi x_{k'}) \right. \\ & \qquad \times \left. F^*(\omega_m,s_p,\bi x_l,\bi x_{l'}) \right) \rho(\bi x_k,\bi x_{k'}) \rho^*(\bi x_l,\bi x_{l'}).
\end{aligned}
\end{equation}
%If the inner summation is equal to unity we would have that $\|\mathbf{F}\brho\|^2_2 = \|\brho\|_2^2$.
In order to determine RIC, we seek an estimate of the inner summation,
%which is defined as
%\begin{equation}\label{eq:inner_sum}
%K(\bi x_k, \bi x_{k'},\bi x_l, \bi x_{l'}) := \sum\limits_{m=1}^{M}\sum\limits_{p=1}^{P}  F_{12}(\o,s,\bi x_k,\bi x_{k'}) F^*_{12}(\o,s,\bi x_l,\bi x_{l'}).
%\end{equation}
\begin{equation}\label{eq:inner_sum}
K(\bar{\bi x}_{k},\bar{\bi x}_{l}) := \sum\limits_{m=1}^{M}\sum\limits_{p=1}^{P}  F_{12}(\omega,s,\bar{\bi x}_k) F^*_{12}(\omega,s,\bar{\bi x}_l)
\end{equation}
where we define $\bar{\bi x}_k = [\bi x_k,\bi x_{k'}]$ and $\bar{\bi x}_l = [\bi x_l,\bi x_{l'}]$.

Assuming sufficiently many samples in fast-time frequency and slow-time,  we use \eqref{eq:forward_model_chap4} to approximate \eqref{eq:inner_sum}, as follows\footnote{In \eqref{eq:inner_int_2}, we apply a change of variables $\omega \rightarrow \omega - \omega_c$ and evaluate the $\omega$ integral in baseband $\Omega$.}:
%This approximation is valid when sufficiently many samples are taken in both fast-time frequency and slow-time.
%Using this approximation
%We define
%\begin{equation}
%K(\bi x_k, \bi x_{k'},\bi x_l, \bi x_{l'}) := \int F_{12}(\o,s,\bi x_k,\bi x_{k'}) F^*_{12}(\o,s,\bi x_l,\bi x_{l'}) ds d\o.
%\label{eq:K_int}
%\end{equation}
%Substituting in \eqref{eq:discrete_kernel} for a continuum of $(s,\o)$ samples into \eqref{eq:K_int} and combining terms we obtain
\begin{equation}
K(\bar{\bi x}_k,\bar{\bi x}_l) \approx \int_{S\times\Omega} e^{-i(\omega_c + \omega) \theta(s,\bar{\bi x}_k,\bar{\bi x}_l)/c_0} B(\omega,s,\bar{\bi x}_k,\bar{\bi x}_l)  ds d\omega
\label{eq:inner_int_2}
\end{equation}
%and
%\begin{equation}
%\theta(s,\bi x_k,\bi x_k',\bi x_{l},\bi x_{l'}) = \frac{\o}{c_0}(\phi_{12}(\o,s,\x_k,\x_{k'})-\phi_{12}(\o,s,\x_l,\x_{l'})).
%\label{eq:theta_1}
%\end{equation}
%More specifically, \eqref{eq:theta_1} is
where
\begin{equation}
\begin{aligned}
\theta(s,\bar{\bi x}_k,\bar{\bi x}_l) = & \ |\x_k-\bgamma_1(s)|-|\x_{k'}-\bgamma_2(s)| -|\x_l-\bgamma_1(s)| \\  & \ +|\x_{l'}-\bgamma_2(s)| +\hat{\y}\cdot(\x_{k'}-\x_{k}-\x_{l'}+\x_{l})
\label{eq:theta_2}
\end{aligned}
\end{equation}
and
\begin{equation}
B(\omega,s,\bar{\bi x}_k,\bar{\bi x}_l) = \frac{1}{(4\pi)^4|\y|^4}B_1(s,\bar{\bi x}_k,\bar{\bi x}_l)B_2(\omega,\bar{\bi x}_k,\bar{\bi x}_l)
\label{eq:normal_amp}
\end{equation}
with
\begin{equation}
\begin{aligned}
B_1(s,\bar{\bi x}_k,\bar{\bi x}_l) = & \frac{1}{|\x_k-\bgamma_1(s)||\x_{k'}-\bgamma_2(s)|} \\ & \times \frac{1}{|\x_l-\bgamma_1(s)||\x_{l'}-\bgamma_2(s)|},
\end{aligned}
\label{eq:B1_1}
\end{equation}
and
\begin{equation}
B_2(\omega,\bar{\bi x}_k,\bar{\bi x}_l) = J_1(\omega,\bi x_k)J^*_2(\omega,\bi x_{k'})J^*_1(\omega,\bi x_l)J_2(\omega,\bi x_{l'}).
\label{eq:B2_1}
\end{equation}
%Furthermore, in the special case of phaseless measurements the phase function in \eqref{eq:theta_2} will reduce to:
%\begin{equation}
%\begin{aligned}
%\theta_p(s,\bar{\bi x}_k,\bar{\bi x}_l) = & \ |\x_k-\bgamma(s)|-|\x_{k'}-\bgamma(s)| -|\x_l-\bgamma(s)| \\  & \ +|\x_{l'}-\bgamma(s)| +\hat{\y}\cdot(\x_{k'}-\x_{k}-\x_{l'}+\x_{l}),
%\label{eq:theta_phaseless}
%\end{aligned}
%\end{equation}
%where we have dropped the subscripts from the receiver terms, because $\bgamma_1(s)=\bgamma_2(s)=\bgamma(s)$.
%%for phaseless measurements.

%\subsection{Performance Guarantees for LRMR based Passive SAR}\label{sec:LRMR_gaurantees}

%In order to obtain RIP based recovery guarantees, we need to obtain an estimate of the integral in \eqref{eq:inner_int_2}.
%We obtain an asymptotic estimate of the integral using the the method of stationary phase.
%However, before we can estimate the integral it is necessary to obtain bounds on the phase function of the integral in \eqref{eq:inner_int_2}. The bounds on the phase lead to a condition on the minimum achievable resolution with exact recovery.

In order to obtain RIC, we need to estimate the integral in \eqref{eq:inner_int_2}. To do so, we evaluate \eqref{eq:inner_int_2} for those $\bar{\x}_k,\bar{\x}_l$ where $\theta(s,\bar{\bi x}_k,\bar{\bi x}_l)$ is zero for all $s$ and for those where $\theta(s,\bar{\bi x}_k,\bar{\bi x}_l)$ is non-zero for almost all $s$ using the method of stationary phase. However, before we can estimate the integral, it is necessary to obtain a lower bound on $\theta(s,\bar{\bi x}_k,\bar{\bi x}_l)$. This lower bound requires a minimum achievable resolution stated in Lemma \ref{thm:suff_cond}.

We define the set of all possible pixel locations as $I = \{ (k,k',l,l') \in \Zb_+ \ \vert \ (\x_k,\x_{k'},\x_l,\x_{l'}) \in [-L_x/2,L_x/2]\times[-L_x/2,L_x/2] \}$. We then divide this set into two disjoint sets $I_1 = \{ (k,k',l,l') \in \Zb_+ \ \vert \ k=l \ \mathrm{and} \ k'=l' \}$ and $I_2 = \{ (k,k',l,l') \in \Zb_+ \ \vert \ k\neq l \ \mathrm{or} \ k'\neq l' \}$.  Specifically, these two sets satisfy $I = I_1 \cup I_2$ and $I_1 \cap I_2 = \emptyset$. Additionally, $\theta(s,\bar{\bi x}_k,\bar{\bi x}_l)$ is zero for all indices in $I_1$ and for all $s$. For each index in $I_2$, $\theta(s,\bar{\bi x}_k,\bar{\bi x}_l)$ is zero for at most two values of $s$ \footnote{This can be seen by setting $\theta(s,\bar{\bi x}_k,\bar{\bi x}_l) = 0$ over $I_2$ and solving for the roots of the resulting second degree polynomial.}. Hence, $\theta(s,\bar{\bi x}_k,\bar{\bi x}_l)$ is non-zero over $I_2$ for almost all $s$.
%The decomposition of the pixels into these sets is to ensure $\theta(s,\bar{\bi x}_k,\bar{\bi x}_l)$ is zero for indices in $I_1$ and non-zero in $I_2$.
%Of course, there is an implicit dependence on slow-time variable in terms of the trajectory.
%Therefore, any set of slow-time points $\{ s_i \}_{i\in \Zb} \subset S$ where this does not hold must have measure zero.
%While there are degenerate trajectories where this will not be the case, most commonly used trajectories will satisfy this.

By studying the behaviour of \eqref{eq:inner_int_2} restricted to each of these sets, we obtain the results in Lemmas \ref{thm:suff_cond} and \ref{thm:int_est}.

\begin{lemma}\label{thm:suff_cond}
Assume the ground topography is flat\footnote{The result can be extended to non-flat topography. However, for simplicity of argument only flat topography case is considered.} and $(k,k',l,l') \in I_2$. Let $\Delta_x = \min_{k,l} |\x_k-\x_l|$ denote the minimum spacing between two scatterers, and $\alpha_{\y}$ be the angle between $\y$ and its orthogonal projection onto the ground plane.
Assume
\begin{equation}
\Delta_x \gg \frac{c_0}{\omega_c |\cos\alpha_{\y} - 1|}.
\label{eq:resolution}
\end{equation}
Then, $\theta(s,\bar{\bi x}_k,\bar{\bi x}_l)$ in \eqref{eq:theta_2} satisfies
\begin{equation}
\min\limits_{(k,k',l,l')\in I_2}\left\{ \frac{\omega_c}{c_0}\big\vert\theta(s,\bar{\bi x}_k,\bar{\bi x}_l)\big\vert \right\} \gg 1.
\label{eq:nec_bnd_2}
\end{equation}
%, and $D=\max_s |\bgamma_1(s)-\bgamma_2(s)|$ is the euclidean distance between the two receivers.
\end{lemma}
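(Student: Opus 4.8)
The plan is to collapse the five-term phase $\theta$ of \eqref{eq:theta_2} onto a single dominant ground-plane projection whose size is controlled by the minimum spacing $\Delta_x$, and then to read \eqref{eq:nec_bnd_2} directly off the resolution hypothesis \eqref{eq:resolution}. First I would use the flat-topography assumption to place every scatterer in the ground plane, $\x_k=(\bi x_k,0)$, so that the transmitter term reduces to a horizontal projection $\hat{\y}\cdot\x=\cos\alpha_{\y}\,(\hat{u}\cdot\bi x)$, where $\hat{u}$ is the unit horizontal projection of $\hat{\y}$. This is the step that exposes the explicit $\cos\alpha_{\y}$ appearing in the target bound.

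Next I would linearize the four range terms in \eqref{eq:theta_2} under the far-field (large stand-off) assumption $L_{g_i}\gg L_x$. To first order each range difference becomes the projection of a pixel-separation vector onto the receiver look direction $\hat{\bgamma}_i(s)=\bgamma_i(s)/|\bgamma_i(s)|$, which after collecting terms gives
\[
\theta \;\approx\; -(\hat{\bgamma}_1+\hat{\y})\cdot(\x_k-\x_l)\;+\;(\hat{\bgamma}_2+\hat{\y})\cdot(\x_{k'}-\x_{l'}),
\]
with a remainder of order $L_x^2/L_{g_i}$ that I would show is negligible against the bound. For $(k,k',l,l')\in I_2$ at least one of the separations $\bi x_k-\bi x_l$ and $\bi x_{k'}-\bi x_{l'}$ is nonzero and therefore has length at least $\Delta_x$. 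Since the separations lie in the ground plane, each effective coefficient vector is the sum of a receiver look direction (horizontal part of magnitude at most one) and the transmitter azimuth $\cos\alpha_{\y}\hat{u}$; the worst case, in which a grazing receiver look direction is anti-aligned with $\hat{u}$, leaves a surviving horizontal coefficient of magnitude $|\cos\alpha_{\y}-1|$. This produces the estimate $|\theta|\gtrsim|\cos\alpha_{\y}-1|\,\Delta_x$, and multiplying by $\omega_c/c_0$ and substituting \eqref{eq:resolution} yields $\tfrac{\omega_c}{c_0}|\theta|\gg1$, which is exactly \eqref{eq:nec_bnd_2}.

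The hardest step is the coefficient bound in the previous paragraph: the receiver range terms can partially cancel the transmitter contribution, so I must argue that over all index patterns in $I_2$ and all $s\in S$ the net ground-plane coefficient cannot be driven below $|\cos\alpha_{\y}-1|$ in the imaging regime of interest, rather than all the way to zero. I also need to quarantine the isolated slow-times at which $\theta$ vanishes --- at most two per index tuple, as noted in the footnote preceding the lemma --- since the pointwise estimate fails there and \eqref{eq:nec_bnd_2} is used only in the almost-every-$s$ sense required by the stationary-phase evaluation of \eqref{eq:inner_int_2} in Lemma \ref{thm:int_est}.
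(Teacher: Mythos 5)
Your overall strategy --- collapse $|\theta|$ to a ground-plane projection of length at least $\Delta_x$ times a coefficient $|\cos\alpha_{\y}-1|$, then invoke \eqref{eq:resolution} --- lands on the same final inequality as the paper, but your route (far-field linearization of all four range terms onto the look directions $\hat{\bgamma}_i(s)$) is different, and the step you yourself flag as hardest is exactly where the argument breaks. In your linearized form $\theta\approx -(\hat{\bgamma}_1+\hat{\y})\cdot(\x_k-\x_l)+(\hat{\bgamma}_2+\hat{\y})\cdot(\x_{k'}-\x_{l'})$, the worst case is \emph{not} a grazing receiver anti-aligned with $\hat{u}$: writing the horizontal part of $\hat{\bgamma}_1$ as $\cos\beta_1\,\hat{v}_1$ with $\beta_1$ the receiver elevation, the horizontal part of $\hat{\bgamma}_1+\hat{\y}$ has magnitude $|\cos\beta_1-\cos\alpha_{\y}|$ when $\hat{v}_1=-\hat{u}$, which vanishes when $\beta_1=\alpha_{\y}$; even when it is nonzero, its dot product with $\bi x_k-\bi x_l$ vanishes whenever the two are orthogonal; and on $I_2$ both separations may be nonzero, in which case the two linearized terms can cancel each other entirely. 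So the claimed pointwise estimate $|\theta|\gtrsim|\cos\alpha_{\y}-1|\,\Delta_x$ does not follow from your decomposition, and the actual content of the lemma is left unproven.

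The paper's proof avoids the linearization altogether: it reduces to the case $k'=l'$ and $\x_k=0$ (so the $\bgamma_2$ terms cancel exactly and only one separation survives), keeps the exact ranges, and applies the reverse triangle inequality to pin the receiver contribution at $-|\x_l|$, leaving $|\theta|\ge\big|-|\x_l|+\hat{\y}\cdot\x_l\big|=|\x_l|\,|\cos\varphi(\x_l,\y)-1|$. The flat-topography identity $\cos\varphi(\x_l,\y)=\cos\alpha_{\y}\cos\varphi(\x_l)\le\cos\alpha_{\y}<1$ then bounds the factor $|\cos\varphi(\x_l,\y)-1|$ uniformly below by $1-\cos\alpha_{\y}$; this uniform separation from zero is precisely the structural fact your linearized coefficient lacks. (You are right that a pointwise-in-$s$ bound cannot hold at the isolated slow-times where $\theta=0$, and that only an almost-every-$s$ statement is needed for the stationary-phase step; the paper does not make this distinction, and its own reverse-triangle step is an extremal-case rather than a uniform bound, but noting that does not repair your argument.) To salvage your approach you would need either the paper's reduction to a single surviving separation with the exact ranges, or an explicit geometric hypothesis ruling out the cancellation configurations above.
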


\begin{proof}
See Appendix \ref{sec:suff_cond_prf}.
\end{proof}

%\textcolor[rgb]{1.00,0.00,0.00}{Eric, Lemma 6 requires serious clean up. I did some modifications below. Please go over it. It appears that you are doing some of the evaluations in baseband, but this is not clear to the reader. If so, I suggest to rewrite K in equation \eqref{eq:inner_int_2} in baseband first. }
%\textcolor[rgb]{0.00,0.00,1.00}{I have changes the defintion of \eqref{eq:inner_int_2} into the baseband and changed this in the table of notation and second paragraph of forward model section where $\Omega$ was originally defined. Also, in Lemma 6, \eqref{eq:cj_def} is actually the square of the $L^2$-norm of the product of the beam patterns, see \eqref{eq:B2_1} and set $\bar{\bi x}_k = \bar{\bi x}_l$. Therefore, $C_J$ can not be negative, but we need it to be non-zero and finite. }
\begin{lemma}\label{thm:int_est}
%\textcolor[rgb]{1.00,0.00,0.00}{Eric, you need to specify if this infinity is +, - or both.}
Suppose
\begin{equation}
C_J = \int_\Omega B_2(\omega,\bar{\bi x}_k,\bar{\bi x}_k) d\omega  < +\infty
\label{eq:cj_def}
\end{equation}
and $C_J > 0$.
%\textcolor[rgb]{1.00,0.00,0.00}{What is $\Omega$? Without rewriting K separately in baseband etc, it is not clear.}
Let
%Let $s_c$ denotes the center of aperture and $|\bgamma_1(s_c)| = L_{g_1}$, $|\bgamma_2(s_c)| = L_{g_2}$, the length of the aperture be $\Delta_s = |s_2-s_1|$, and
%\textcolor[rgb]{0.00,0.00,1.00}{
\begin{equation}
W(s,\bar{\bi x}_k,\bar{\bi x}_l) = \int\limits_{\Omega} e^{-i\frac{\omega}{c_0}\theta(s,\bar{\bi x}_k,\bar{\bi x}_l)} B_2(\omega,\bar{\bi x}_k,\bar{\bi x}_l) d\omega.
\label{eq:W_def}
\end{equation}
%}
Assume 
$W(s,\bar{\bi x}_k,\bar{\bi x}_l)$ satisfies
\begin{equation}
\underset{s,\bar{\bi x}_k,\bar{\bi x}_l \in R}{\sup} \ \big\vert \partial^{\alpha}_{\bar{\bi x}_k} \partial^{\beta}_{\bar{\bi x}_l} \partial^{\eta}_{s} W(s,\bar{\bi x}_k,\bar{\bi x}_l) \big\vert \leq C_s (1+s^2)^{(2-|\eta|)/2}
\label{eq:mosp_assump}
\end{equation}
where $R$ is a compact subset of $S \times \Rb^2 \times \Rb^2$ and $C_s$ is a constant that depends on $R,\alpha,\beta,\eta$.
Furthermore, assume the conditions of Lemma \ref{thm:suff_cond} hold and the ranges of the receivers to the scene are much larger than the size of the scene $L_x$.
% \textcolor[rgb]{0.00,0.00,1.00}{ and the length of the synthetic apertures.} % $\Delta_s \gg |\bgamma_1(s)|,|\bgamma_2(s)|$, and $L_{g_1}, L_{g_2} \gg L_x$.
 Then, as $\omega_c \rightarrow \infty$,  %\eqref{eq:inner_int_2} is
%with phase \eqref{eq:theta_2} is
\begin{equation}
K(\bar{\bi x}_k,\bar{\bi x}_l) \approx
\begin{cases}
\frac{C_J \Delta_s}{(4\pi)^4|\y|^4 L_{g_1}^2 L_{g_2}^2}  & (k,k',l,l') \in I_1, \\
\begin{aligned}
 & \frac{1}{(4\pi)^4|\y|^4 L^2_{g_1}L^2_{g_2}} \left\{ \sqrt{\frac{2\pi}{\omega_c}} \right. \\ & \times \left. C_K + \Oc\left( \frac{1}{\omega_c^{3/2}} \right) \right\} 
 \end{aligned}  &  (k,k',l,l') \in I_2,
\label{eq:K_est}
\end{cases}
\end{equation}
where
$L_{g_i}$ is the range of the $i^{th}$ receiver from the center of the synthetic apertures and
% $\Delta_s = |s_2-s_1|$ and
\begin{equation}
\begin{aligned}
C_K = & \sum\limits_{\{ s_0 \vert \dot{\theta}(s_0,\bar{\bi x}_k, \bar{\bi x}_l)=0 \}} \frac{e^{i\omega_c\theta(s_0,\bar{\bi x}_k, \bar{\bi x}_l)}e^{i\pi/4\ddot{\theta}(\bar{\bi x}_k, \bar{\bi x}_l,s_0)}}{\sqrt{|\ddot{\theta}(s_0,\bar{\bi x}_k, \bar{\bi x}_l)|}} \\ & \qquad \qquad \qquad \quad \times W(s_0,\bar{\bi x}_k, \bar{\bi x}_l),
\end{aligned}
\end{equation}
with $s_0$ satisfying $\dot{\theta}(s_0,\bar{\bi x}_k,\bar{\bi x}_l) = 0$ and $\ddot{\theta}(s_0,\bar{\bi x}_k,\bar{\bi x}_l) \neq 0$.
\end{lemma}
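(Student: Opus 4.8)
The plan is to estimate the oscillatory integral \eqref{eq:inner_int_2} separately on the two index sets $I_1$ and $I_2$, exploiting the fact established above that $\theta(s,\bar{\bi x}_k,\bar{\bi x}_l)$ vanishes identically in $s$ on $I_1$ but is non-zero for almost every $s$ on $I_2$. First I would substitute the factored amplitude \eqref{eq:normal_amp} into \eqref{eq:inner_int_2} and use the hypothesis that the receiver ranges greatly exceed the scene size $L_x$ to replace the slowly varying range factor $B_1$ in \eqref{eq:B1_1} by its value at the scene center, namely $B_1 \approx 1/(L_{g_1}^2 L_{g_2}^2)$, pulling this constant outside the integral. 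Writing $e^{-i(\omega_c+\omega)\theta/c_0} = e^{-i\omega_c\theta/c_0}\,e^{-i\omega\theta/c_0}$ and carrying out the baseband $\omega$-integration first collapses the $\omega$ dependence into $W(s,\bar{\bi x}_k,\bar{\bi x}_l)$ of \eqref{eq:W_def}, reducing $K$ to the single slow-time integral
\[
K \approx \frac{1}{(4\pi)^4|\y|^4 L_{g_1}^2 L_{g_2}^2} \int_S e^{-i\omega_c\theta(s,\bar{\bi x}_k,\bar{\bi x}_l)/c_0}\, W(s,\bar{\bi x}_k,\bar{\bi x}_l)\, ds.
\]

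On $I_1$, where $k=l$ and $k'=l'$, the phase $\theta$ is identically zero, so the exponential is unity and $W(s,\bar{\bi x}_k,\bar{\bi x}_k) = \int_\Omega B_2(\omega,\bar{\bi x}_k,\bar{\bi x}_k)\,d\omega = C_J$ is independent of $s$ by \eqref{eq:cj_def}. The remaining integration over $S$ then contributes the slow-time aperture length $\Delta_s$, yielding the first branch of \eqref{eq:K_est}. On $I_2$, I would instead apply the method of stationary phase to the slow-time integral with large parameter $\omega_c$. By Lemma \ref{thm:suff_cond}, under the resolution condition \eqref{eq:resolution}, $\theta$ stays bounded away from zero except at the isolated critical points $s_0$ solving $\dot{\theta}(s_0,\bar{\bi x}_k,\bar{\bi x}_l)=0$, which are finite in number (the second-degree-polynomial observation in the footnote) and which I would show are non-degenerate, $\ddot{\theta}(s_0)\neq 0$. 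The standard single-variable stationary-phase expansion then produces the leading $\omega_c^{-1/2}$ term $\sqrt{2\pi/\omega_c}\,C_K$, where $C_K$ collects, for each critical point, the Hessian factor $|\ddot{\theta}(s_0)|^{-1/2}$, the Fresnel phase $e^{i\pi/4\,\mathrm{sgn}\,\ddot{\theta}(s_0)}$, the on-phase factor evaluated at $s_0$, and the amplitude $W(s_0,\bar{\bi x}_k,\bar{\bi x}_l)$, together with a remainder of order $\omega_c^{-3/2}$; this gives the second branch of \eqref{eq:K_est}.

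The hard part will be the rigorous justification of the stationary-phase step on $I_2$: controlling both the non-stationary contribution and the $\Oc(\omega_c^{-3/2})$ remainder uniformly over admissible scatterer configurations. This is precisely where the symbol-type bound \eqref{eq:mosp_assump} on the derivatives of $W$ enters. The prescribed $(1+s^2)^{(2-|\eta|)/2}$ growth is tailored so that repeated integration by parts against the oscillatory factor—each step gaining a power of $(\omega_c\dot{\theta})^{-1}$—keeps the boundary and remainder terms integrable in $s$, while the lower bound $\tfrac{\omega_c}{c_0}|\theta|\gg 1$ from Lemma \ref{thm:suff_cond} guarantees that the non-stationary regions decay faster than any inverse power of $\omega_c$ and, crucially, that the $I_2$ contribution is genuinely $O(\omega_c^{-1/2})$ rather than $O(1)$. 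Establishing the non-degeneracy $\ddot{\theta}(s_0)\neq 0$ and the finiteness of the critical set directly from the geometry of \eqref{eq:theta_2}, so that the one-dimensional stationary-phase formula applies verbatim, is the remaining technical point; the clean order separation between the $I_1$ branch ($O(1)$) and the $I_2$ branch ($O(\omega_c^{-1/2})$) is what will later drive the RIC estimate in Theorem \ref{thm:gaurantee}.
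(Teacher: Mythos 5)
Your proposal follows essentially the same route as the paper's proof: on $I_1$ the phase vanishes so the integral factors into $C_J$ times the slow-time integral of $B_1\approx 1/(L_{g_1}^2L_{g_2}^2)$ giving the $\Delta_s$ term, and on $I_2$ the $\omega$-integration is collapsed into $W$ followed by one-dimensional stationary phase in $s$ with large parameter $\omega_c$, using \eqref{eq:mosp_assump} to control the expansion. The only point worth noting is that the paper does not prove non-degeneracy $\ddot{\theta}(s_0)\neq 0$ either --- it is simply assumed in the lemma statement --- so your flagging of that step as the remaining technical burden is consistent with, indeed slightly more candid than, the published argument.
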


\begin{proof}
See Appendix \ref{sec:int_est_pf}.
\end{proof}

%\textcolor[rgb]{1.00,0.00,0.00}{Eric, you do not need to define non-essential variables in great detail. These can be defined in the appendix in precise terms.}
%\textcolor[rgb]{0.00,0.00,1.00}{Should the definition of stationary point also be removed from the Lemma statement?}

These two lemmas are necessary to obtain exact reconstruction. The result of Lemma \ref{thm:suff_cond} is required to ensure that we are in the asymptotic regime allowing us to use the method of stationary phase in proving Lemma \ref{thm:int_est}. The result of Lemma \ref{thm:suff_cond}, in return, requires a minimum target spacing, yielding a lower bound on the achievable resolution for exact reconstruction.

Under the assumptions of Lemmas \ref{thm:suff_cond} and \ref{thm:int_est}, we can now obtain the following performance guarantee.

\begin{theorem}\label{thm:gaurantee}
Suppose the conditions of Lemma \ref{thm:suff_cond} and \ref{thm:int_est} hold. Then as $\omega_c \rightarrow \infty$, the exact locations and amplitudes of scatterers can be recovered by solving $\mathrm{P}_2$ in \eqref{eq:nuc_norm_min}.%, with the forward model defined by the phase function \eqref{eq:ff_range_term1}.
\end{theorem}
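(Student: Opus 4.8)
The plan is to prove that the normalized forward operator satisfies the restricted isometry property of Definition~\ref{def:rip_cond} with a restricted isometry constant that vanishes as $\omega_c\to\infty$, and then to invoke the exact-recovery guarantees of Theorems~\ref{thm:recht_thm1}--\ref{thm:chai_thm}. Starting from the expansion \eqref{eq:l2_norm_data} of $\|\mathbf{F}\brho\|_2^2$, which is a double sum of the kernel $K(\bar{\bi x}_k,\bar{\bi x}_l)$ against the products $\rho_{\bar{k}}\rho^*_{\bar{l}}$ with $\rho_{\bar{k}}:=\rho(\bi x_k,\bi x_{k'})$, I would split the summation over the disjoint sets $I_1$ and $I_2$. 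On $I_1$ (the diagonal, $k=l$ and $k'=l'$) Lemma~\ref{thm:int_est} evaluates the kernel to the single positive constant $K_1=C_J\Delta_s/[(4\pi)^4|\y|^4 L_{g_1}^2 L_{g_2}^2]$, so that the diagonal part equals $K_1\sum_{\bar{k}}|\rho_{\bar{k}}|^2=K_1\|\brho\|_F^2$; on $I_2$ the same lemma shows every kernel entry is $\Oc(\omega_c^{-1/2})$.

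I would then normalize by setting $\mathbf{A}=\mathbf{F}/\sqrt{K_1}$, which after rescaling the data leaves the feasible set and the minimizer of $\mathrm{P}_2$ unchanged, so that
\[
\|\mathbf{A}\brho\|_2^2=\|\brho\|_F^2+\frac{1}{K_1}\sum_{(k,k',l,l')\in I_2}K(\bar{\bi x}_k,\bar{\bi x}_l)\,\rho_{\bar{k}}\,\rho^*_{\bar{l}}.
\]
The off-diagonal remainder is controlled by $\max_{I_2}|K|\cdot(\sum_{\bar{k}}|\rho_{\bar{k}}|)^2$, and a Cauchy--Schwarz step bounds $(\sum_{\bar{k}}|\rho_{\bar{k}}|)^2$ by a fixed ($N$-dependent) multiple of $\|\brho\|_F^2$. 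Because $K_1$ is positive and finite, and in particular is bounded away from zero by the hypothesis $C_J>0$ and does not decay with $\omega_c$, while $\max_{I_2}|K|=\Oc(\omega_c^{-1/2})$, the remainder is at most $\delta(\omega_c)\|\brho\|_F^2$ with $\delta(\omega_c)\to0$. Taking square roots recasts this as the two-sided bound of Definition~\ref{def:rip_cond}, giving $\delta_r(\omega_c)\to0$ for any fixed rank budget $r$.

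To finish, I would fix $\omega_c$ large enough that $\delta_2<1$ and apply Theorem~\ref{thm:chai_thm} to the rank-one true Kronecker scene $\brho^\star=\tilde{\rho}\otimes\tilde{\rho}^*$ (equivalently, take $\delta_5\le 1/10$ and use Theorem~\ref{thm:recht_thm2}); this identifies $\brho^\star$ as the unique minimizer of $\mathrm{P}_2$. Since $\brho^\star$ is rank-one and positive semi-definite, its single nonzero eigenpair returns $\tilde{\rho}$ up to a global phase, so the support (scatterer locations) and the amplitudes are recovered exactly.

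The main obstacle is to make the off-diagonal estimate uniform, which amounts to controlling the stationary-phase coefficient $C_K$ across all index pairs in $I_2$: one must check that the amplitude $W(s_0,\cdot)$ is bounded (provided by hypothesis \eqref{eq:mosp_assump}) and that $\ddot{\theta}(s_0)$ is bounded away from zero at every stationary point, so that $1/\sqrt{|\ddot{\theta}|}$ cannot blow up. For a fixed finite pixel grid the maximum over $I_2$ ranges over finitely many pairs, so these pointwise facts upgrade to a uniform $\Oc(\omega_c^{-1/2})$ bound; the delicate point is that the resulting constant, and therefore the threshold $\omega_c$ at which $\delta_2<1$, depends on the scene size $N$ and the imaging geometry, so the statement must be read for a fixed discretization. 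One must also justify the passage from the discrete kernel \eqref{eq:inner_sum} to its continuous stationary-phase approximation, which is precisely the content imported from Lemma~\ref{thm:int_est}.
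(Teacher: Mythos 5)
Your proposal follows essentially the same route as the paper's proof: expand $\|\mathbf{F}\brho\|_2^2$ via \eqref{eq:l2_norm_data}, split the double sum over $I_1$ and $I_2$, apply Lemma~\ref{thm:int_est} to get a constant diagonal contribution proportional to $\|\brho\|_F^2$ and an $\Oc(\omega_c^{-1/2})$ off-diagonal remainder, conclude that the normalized operator has vanishing RIC, and invoke the recovery theorems of Section~\ref{sec:PA_LRMR_theory}. Your treatment is in fact somewhat more careful than the paper's (the explicit normalization, the Cauchy--Schwarz control of the $I_2$ remainder against $\|\brho\|_F^2$, and the remark that the threshold on $\omega_c$ depends on the fixed discretization), but the underlying argument is the same.
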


\begin{proof}
See Appendix \ref{sec:gaurantee_pf}.
\end{proof}

%\textcolor[rgb]{1.00,0.00,0.00}{Let's make the following a separate section.} 
\section{Phaseless Passive Imaging and Exact Recovery}\label{sec:phaseless}

%\subsection{Performance Guarantees for Imaging without Phase}

When the two receivers are collocated, i.e., $\bgamma_1(s)=\bgamma_2(s)=\bgamma(s)$ the passive SAR forward model in \eqref{eq:forward_model_chap4} reduces to that of phaseless imaging.
In this case, the phase function in \eqref{eq:theta_2} becomes
\begin{equation}
\begin{aligned}
\theta(s,\bar{\bi x}_k,\bar{\bi x}_l) = & \ |\x_k-\bgamma(s)|-|\x_{k'}-\bgamma(s)| -|\x_l-\bgamma(s)| \\  & \ +|\x_{l'}-\bgamma(s)| +\hat{\y}\cdot(\x_{k'}-\x_{k}-\x_{l'}+\x_{l}).
\label{eq:theta_phaseless}
\end{aligned}
\end{equation}
We show that the uniqueness results obtained in \cite{Chai11} also follow from our analysis,
generalizing it to arbitrary imaging geometries.

In the previous section, we decompose the pixel locations into two sets $I_1$ and $I_2$ so that \eqref{eq:theta_2} is zero on $I_1$ and non-zero on $I_2$.
In the case of phaseless imaging, these sets are different.
More specifically, there is a set $I_3 \subset I_2$ for which the phase function \eqref{eq:theta_phaseless} is zero, but \eqref{eq:theta_2} is not.
We define this set as $I_3 = \{ (k,k',l,l') \in \Zb_+ \  k=k' \ \mathrm{or} \ l=l', k\neq l \}$, and let $\tilde{I}_2 = I_2 \backslash I_3$, and $\tilde{I}_1 = I_1 \cup I_3$. We carry out the same analysis in Section \ref{sec:PA_analysis_gaurantees} using the sets $\tilde{I}_1$ and $\tilde{I}_2$. However, for phaseless imaging it is useful to consider $I_1$ and $I_3$ separately, instead of $\tilde{I}_1$.

As in the previous case we can obtain recovery guarantees, however, the RIC is larger when using auto-correlated measurements than cross-correlated measurements.
This is a result of the fact that $\theta(s,\bar{\bi x}_k,\bar{\bi x}_l)=0$ in the additional set $I_3$ in phaseless imaging
%The recovery result is stated in Theorem \ref{thm:phaseless_thm}.
 % and note that this result is equivalent to that obtained in \cite{Chai11}.

Before we state the recovery results for phaseless imaging, we briefly comment on the proofs of Lemmas \ref{thm:suff_cond} and \ref{thm:int_est} with $I_1$ and $I_2$ replaced with $\tilde{I}_1$ and $\tilde{I}_2$, respectively.
First, the proof of Lemma \ref{thm:suff_cond} remains the same, since in this case we only need to consider the difference between the distance of two different scatterer locations to the receivers. This is the case, because the only way three of the four magnitude terms in the phase cannot be equal is for one them to be equal to the fourth, and this would make the phase function zero.
Thus, the approach to the proof and the resulting final condition follow from the same procedure as the proof of Lemma \ref{thm:suff_cond}.
In the case of Lemma \ref{thm:int_est}, the proof remains exactly the same, despite changing the sets.
Therefore, we can use these two lemmas to now state the main recovery result for phaseless imaging.

\begin{theorem}\label{thm:phaseless_thm}
Suppose $\bgamma_1(s) = \bgamma_2(s) = \bgamma(s)$ for all $s\in S$. Let $\tilde{I}_1 = \{ (k,k',l,l') \in \Zb | (k=k' \ \mathrm{and} \ l=l') \ \mathrm{or} \ (k=l \ \mathrm{and} \ k'=l', k \neq k') \}$ and $\tilde{I}_2 = \{ (k,k',l,l') \in \Zb | (k \neq k' \ \mathrm{or} \ l\neq l') \ \mathrm{and} \ (k \neq l \ \mathrm{or} \ k \neq  l') \}$. Suppose the conditions in Lemma \ref{thm:suff_cond} and \ref{thm:int_est} hold with $I_1$ and $I_2$ replaced with $\tilde{I}_1$ and $\tilde{I}_2$, respectively. Then as $\omega_c \rightarrow \infty$, the exact locations and amplitudes of the scatterers can be recovered by solving $\mathrm{P}_4$.%, defined in \eqref{eq:rank_min_chap4}.
% when $\bgamma_i(s)=\bgamma_j(s)$ in the forward model defined in \eqref{eq:forward_model_chap4}.
\end{theorem}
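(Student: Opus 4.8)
The plan is to follow the proof of Theorem \ref{thm:gaurantee}, computing the restricted isometry constant of the (suitably normalized) phaseless forward operator from the kernel estimates and then invoking Theorem \ref{thm:chai_thm}, but now carefully tracking the extra degenerate index set $I_3$ produced by autocorrelation. Writing the entries of the Gram operator $\mathbf{F}^H\mathbf{F}$ as the inner sums $K(\bar{\bi x}_k,\bar{\bi x}_l)$ of \eqref{eq:inner_sum}, every rank-$\le 2$ matrix $\mathbf{X}$ satisfies
\begin{equation*}
\|\mathbf{F}[\mathbf{X}]\|_2^2 = \sum_{k,k'}\sum_{l,l'} K(\bar{\bi x}_k,\bar{\bi x}_l)\, X_{kk'} X_{ll'}^{*},
\end{equation*}
so the RIC is read off by comparing this quadratic form against $\|\mathbf{X}\|_F^2$ over matrices of rank at most $2$, which is the rank budget required by Theorem \ref{thm:chai_thm}.

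By the remarks preceding the theorem, Lemmas \ref{thm:suff_cond} and \ref{thm:int_est} hold verbatim with $I_1,I_2$ replaced by $\tilde{I}_1,\tilde{I}_2$. Hence for $(k,k',l,l')\in\tilde{I}_1$ the kernel $K$ is an $\omega_c$-independent (to leading order) constant --- equal to $K_0 := C_J\Delta_s/[(4\pi)^4|\y|^4 L_{g_1}^2 L_{g_2}^2]$ on the diagonal part $I_1$ and an analogous non-vanishing constant on the off-diagonal part $I_3$ --- whereas for $(k,k',l,l')\in\tilde{I}_2$ one has $K=\Oc(\omega_c^{-1/2})$. Splitting the double sum accordingly, I would write $\|\mathbf{F}[\mathbf{X}]\|_2^2 = \langle \mathbf{G}_0\,\mathrm{vec}(\mathbf{X}),\mathrm{vec}(\mathbf{X})\rangle + E(\mathbf{X})$, where $\mathbf{G}_0$ is the limiting Gram operator carrying the $\tilde{I}_1$ entries and the error $E(\mathbf{X})$ collects the $\tilde{I}_2$ terms and obeys $|E(\mathbf{X})|\le C_N\,\omega_c^{-1/2}\|\mathbf{X}\|_F^2$ (Cauchy--Schwarz over the finitely many $\Oc(\omega_c^{-1/2})$ entries), so it vanishes as $\omega_c\to\infty$ for a fixed scene.

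The decisive difference from the cross-correlation case is that $\mathbf{G}_0$ is \emph{not} a scalar multiple of the identity: in Theorem \ref{thm:gaurantee} one has $I_3=\emptyset$, whence $\mathbf{G}_0=K_0\mathbf{I}$ and $\delta_r\to0$, but here the $I_3$ entries persist and couple the diagonal entries $X_{kk}$ to the remaining entries of $\mathbf{X}$. I would therefore normalize by $K_0$ and study the restricted spectrum of $K_0^{-1}\mathbf{G}_0$ over rank-$\le2$ matrices, i.e. verify that $Q(\mathbf{X}) := K_0^{-1}\langle\mathbf{G}_0\,\mathrm{vec}(\mathbf{X}),\mathrm{vec}(\mathbf{X})\rangle$ satisfies $q_-\|\mathbf{X}\|_F^2 \le Q(\mathbf{X}) \le q_+\|\mathbf{X}\|_F^2$ with $0<q_-$ and $q_+<4$; this yields a limiting constant $\delta_2=\max(1-\sqrt{q_-},\sqrt{q_+}-1)<1$, which together with the vanishing error gives $\delta_2<1$ for all sufficiently large $\omega_c$. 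Applying Theorem \ref{thm:chai_thm} then identifies the true rank-one Kronecker scene $\brho^{\star}$ as the unique minimizer of $\mathrm{P}_2$, and the equivalence of $\mathrm{P}_2$, $\mathrm{P}_3$ and $\mathrm{P}_4$ on PSD matrices transfers the guarantee to $\mathrm{P}_4$; finally $\tilde{\rho}$ is read off as the lone eigenvector of $\brho^{\star}$.

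The main obstacle is precisely the lower bound $q_->0$: because autocorrelation annihilates the phase on $I_3$, injectivity of the forward map on rank-$2$ matrices is no longer automatic, and I must show that the indefinite $I_3$ correction cannot drive $Q$ to zero on any nonzero rank-$2$ matrix. Establishing this --- essentially that the phaseless operator still separates rank-$2$ matrices despite the lost phase --- is the crux that recovers, as the special co-located-array case, the geometry-specific uniqueness result of \cite{Chai11}; the upper bound $q_+<4$ should follow routinely from the boundedness of the finitely many $\tilde{I}_1$ amplitudes.
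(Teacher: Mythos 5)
Your setup---splitting the Gram form over $\tilde{I}_1$ and $\tilde{I}_2$, discarding the $\tilde{I}_2$ terms as $\Oc(\omega_c^{-1/2})$, and reducing the problem to two-sided bounds on the limiting quadratic form $Q$ carried by $\tilde{I}_1$---matches the paper's proof step for step. But you stop exactly at the decisive point: you declare the lower bound $q_->0$ to be ``the crux'' and leave it as an open obstacle, describing the $I_3$ contribution as an \emph{indefinite} correction that might drive $Q$ to zero on some rank-two matrix. That contribution is not indefinite. The indices in $I_3$, together with the diagonal part of $I_1$, are precisely those with $k=k'$ and $l=l'$, so the corresponding block of the sum factors as
\begin{equation*}
\sum_{k,l} \rho(\bi x_k,\bi x_k)\,\rho^*(\bi x_l,\bi x_l) \;=\; \Bigl\vert \sum_{k} \rho(\bi x_k,\bi x_k) \Bigr\vert^{2} \;=\; \bigl\vert \Tr(\brho) \bigr\vert^{2} \;\geq\; 0 .
\end{equation*}
Hence the limiting form is $Q(\brho)\sim\|\brho\|_F^2+\left(\Tr(\brho)\right)^2$: the extra degenerate set contributes a rank-one positive semidefinite perturbation of the identity, so $q_-=1$ with no further argument, and $q_+=1+r$ follows from $\vert\Tr(\brho)\vert\leq\sqrt{r}\,\|\brho\|_F$. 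This is exactly how the paper closes the proof: it obtains $\|\brho\|_F^2\lesssim\|\mathbf{F}\bar{\brho}\|_2^2\lesssim(1+r)\|\brho\|_F^2$, reads off (after rescaling $\mathbf{F}$) a restricted isometry constant $\delta_2<1$, and invokes Theorem \ref{thm:chai_thm}. Without the identification of the $I_3$ sum as the squared trace, your argument establishes neither injectivity on rank-two matrices nor $\delta_2<1$; with it, the ``main obstacle'' you name disappears. Two minor points: the rank budget for Theorem \ref{thm:chai_thm} is $r=2$, so your target $q_+<4$ is met with $q_+=3$; and the raw constants $1$ and $\sqrt{3}$ must be symmetrized by a normalization of $\mathbf{F}$ before they can be read as $1-\delta_2$ and $1+\delta_2$ for a single $\delta_2$, a step the paper itself glosses over when it writes $\delta_{2r}=\max(0,1-\sqrt{3})$.
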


\begin{proof}
See Appendix \ref{sec:phaseless_gaurantee_pf}.
\end{proof}

\section{Discussion}\label{sec:results_discuss}

We begin our discussion with the connection of the algorithmic approach in \eqref{eq:opt_alg} to the recovery results stated in Theorems \ref{thm:gaurantee} and \ref{thm:phaseless_thm}.
The recovery results state that when the measurement map satisfies RIP with sufficiently small RIC, the solutions of $\mathrm{P}_4$ and $\mathrm{P}_1$ are unique and equal.
This implies that the exact Kronecker scene can be recovered. 
While $\mathrm{P}_4$ can be solved efficiently using interior point methods, this approach is not suitable for the large scale problem of SAR imaging.
Therefore, in order to solve $\mathrm{P}_4$ efficiently we propose the use of $\mathrm{P}_5$, which is strictly convex and has unique solution.
For sufficiently large $\lambda$, this solution is the minimum Frobenius-norm solution of $\mathrm{P}_4$ which has unique solution for sufficiently large $\omega_c$.
Therefore, when $\lambda$ and $\omega_c$ are sufficiently large, the iteration in \eqref{eq:opt_alg} converges to the true Kronecker scene and thus recovers the scene reflectivity exactly.

Along with provable exact reconstruction, our analysis provides a condition on the minimum resolvable target spacing. 
The condition is stated in Lemma \ref{thm:suff_cond} and suggests that convex LRMR based methods have \qq{super-resolution} capability, in the sense that it offers superior resolution than that of Fourier-based imaging.
% can be achieved. 
While the resolution condition depends on $\omega_c$ and $\alpha_{\y}$, which are not user controlled parameters in passive SAR imaging, for typical terrestrial illuminators of opportunity the values of these parameters are such that super-resolution can be achieved.
%While the resolution condition depends on the center frequency and transmitter elevation angle which are not controllable parameters in passive SAR imaging, for typical terrestrial illuminators of opportunity this leads to super-resolution.
%The specific advantage of the convex LRMR method is that resolution depends on a large center frequency and small transmitter elevation angle, which are typical attributes for passive imaging.
This is an appealing attribute of our method as resolution does not depend on bandwidth as in the case of Fourier-based methods.
For example, DVB-T and WiMAX waveforms have bandwidth of $6$-$8$ MHz and operate at around $760$ MHz and $2$ GHz center frequencies, respectively.
If the transmitter elevation angle is $15$ degrees, the resolution lower bound is $2$m for DVB-T and even smaller for WiMAX.
This is far superior than the $19$-$20$m of resolution achievable by Fourier-based imaging.

%Next, we comment on the differences of using cross-correlated and auto-correlated measurements.
%We begin our discussion with the connection between the LRMR based passive SAR method and its relationship to phaseless imaging.
%\textcolor[rgb]{0.00,0.00,1.00}{ 
By comparing the results of Theorems \ref{thm:gaurantee} and \ref{thm:phaseless_thm}, it is clear that using cross-correlated measurements results in a smaller RIC than using auto-correlated measurements.
This implies that $\omega_c$ required for exact reconstruction is smaller in the case of cross-correlated measurements.
%Since the value of the RIC is a result of evaluation of \eqref{eq:inner_int_2} for pixel locations where \eqref{eq:theta_2} is zero, when $\bgamma_i(s)=\bgamma_j(s)$ for all $s$, there is an additional set of pixels locations that make \eqref{eq:theta_2} equal zero.
Smaller RIC is due to the fact that $I_1 \subset \tilde{I}_1$, i.e., the phase of the forward map is zero on a larger set of pixels in phaseless imaging.
Hence, we conclude that a smaller RIC results from retaining phase information in the cross-correlated data.
While the recovery guarantees are more easily obtained by cross-correlating measurements, auto-correlating offers incoherent imaging capability.
This is useful when there is synchronization error, a common problem in radar imaging using distributed apertures \cite{Hack2014_2}.
%}

We conclude this section with a discussion of the validity of the assumptions required in proving the recovery results.
The main assumptions are made in evaluation of \eqref{eq:inner_int_2}, stated in Lemma \ref{thm:int_est}. 
The required assumptions apply to the imaging geometry, transmitted waveform, and antenna beam patterns.
With respect to the imaging geometry, we require that the ranges of the antennas are much smaller than the size of the scene and synthetic aperture.
This assumption is satisfied by all typical airborne and space-borne imaging geometries \cite{Casteel2007}.
With respect to assumptions \eqref{eq:cj_def} and \eqref{eq:W_def}, they are satisfied when the antennas are sufficiently broadband and the transmitted waveform has flat spectrum or constant modulus.
These are likely to hold as most antennas are designed to be broadband, and many illuminators of opportunity transmit waveforms with flat spectrum, such as DVB-T and WiMAX signals.

\section{Numerical Simulations}\label{sec:numerical_simulations}

In this section we validate the recovery result presented in the previous sections.

\subsection{Experimental Set-up}\label{sec:exp_set_up}

We consider a scene with flat topography that is $[0,110]\times[0,110]\mathrm{m}^2$ discretized into $11\times 11$ pixels. 
This results in a Kronecker scene of $121 \times 121$ pixels, each pixel having $10\mathrm{m}$ of resolution.
The scene consists of a single extended target with three different reflectivity values as shown in Figure \ref{fig:eig_img_orig}.
The corresponding Kronecker scene is shown in Figure \ref{fig:kron_img_orig}.

\begin{figure*}[!t]
\centering
\subfigure[]{
	\includegraphics[width=0.45\textwidth]{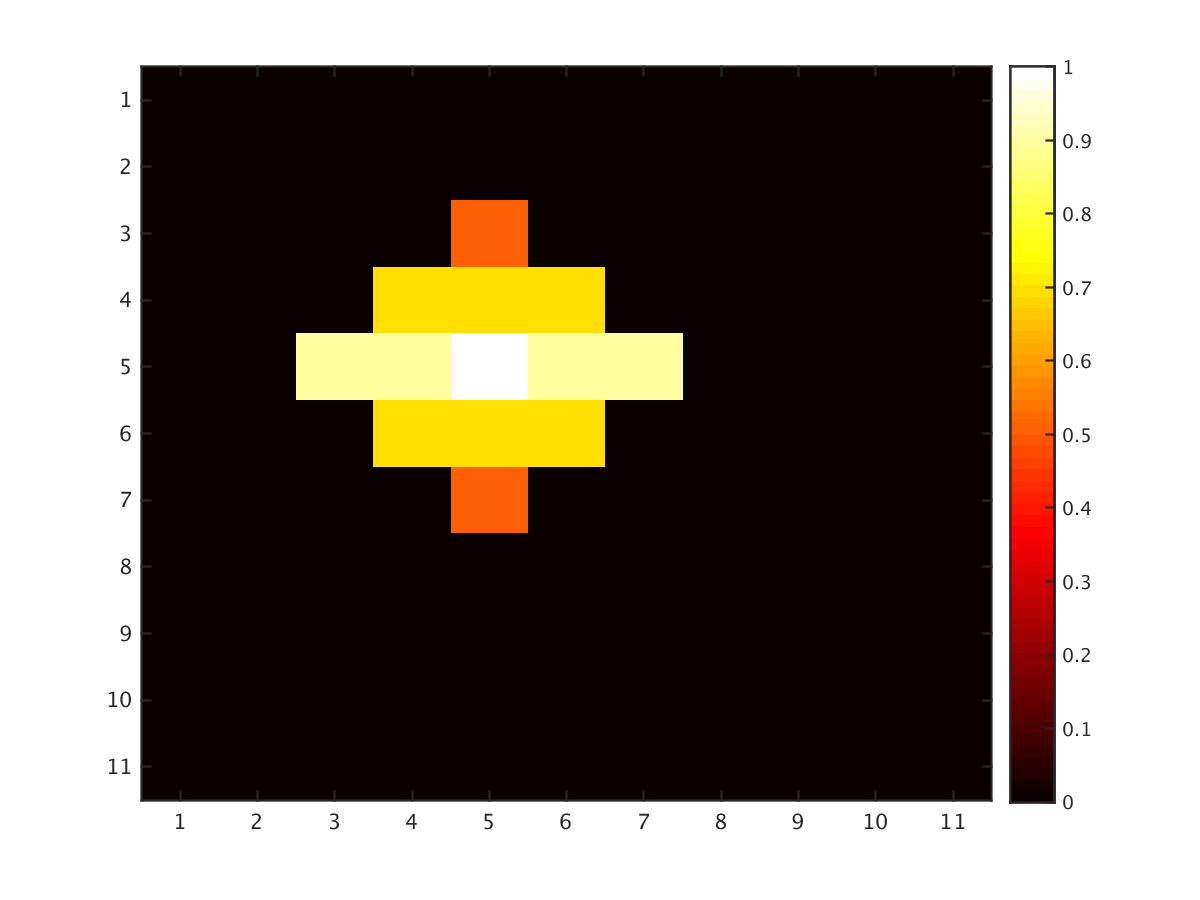}
	\label{fig:eig_img_orig}
}
\subfigure[]{
	\includegraphics[width=0.45\textwidth]{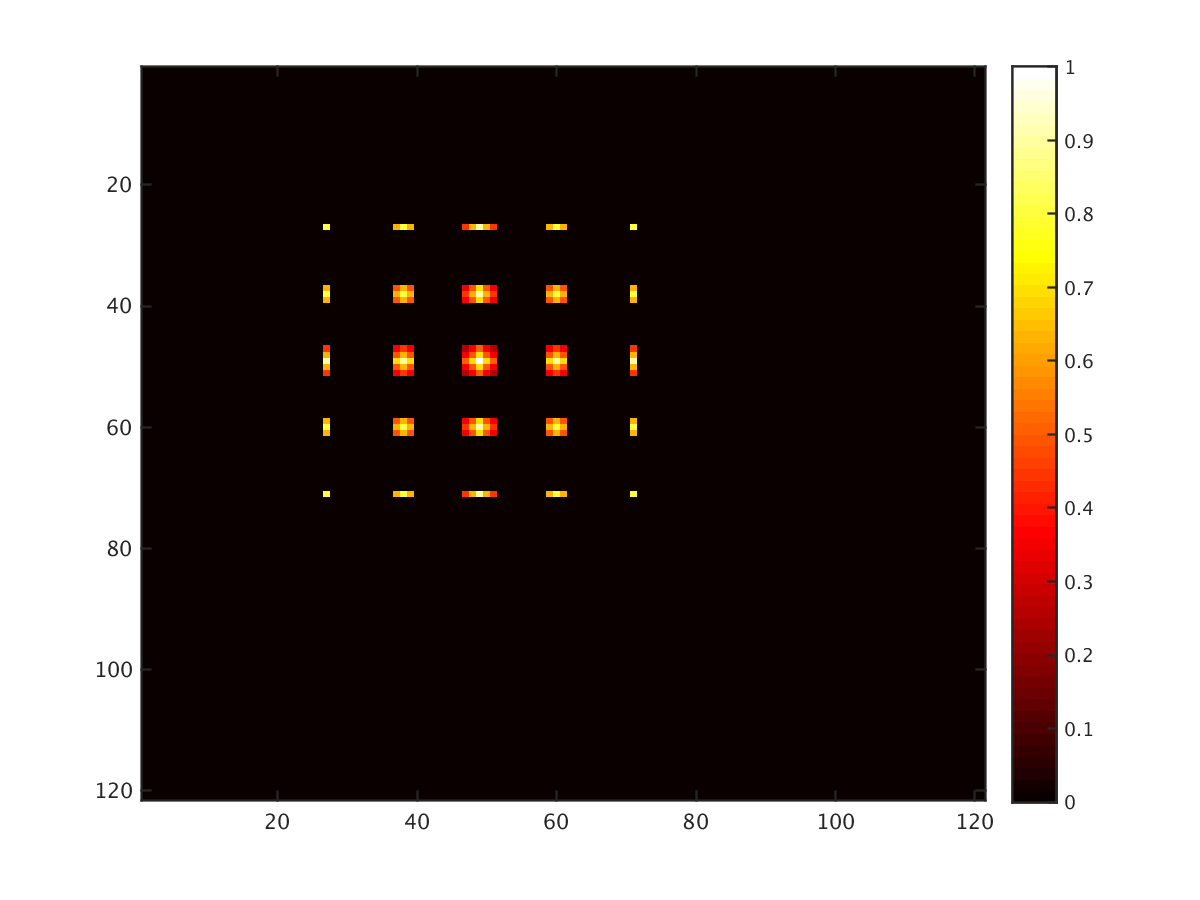}
	\label{fig:kron_img_orig}
}
\caption{ The scene reflectivity and Kronecker scene phantom used in our simulations. \subref{fig:eig_img_orig} Scene reflectivity. \subref{fig:kron_img_orig} Kronecker scene. }
\label{fig:img_phantoms}
\end{figure*}

%\begin{figure}[!t]
%\centering
%\subfigure[]{
%\includegraphics[width=0.45\textwidth]{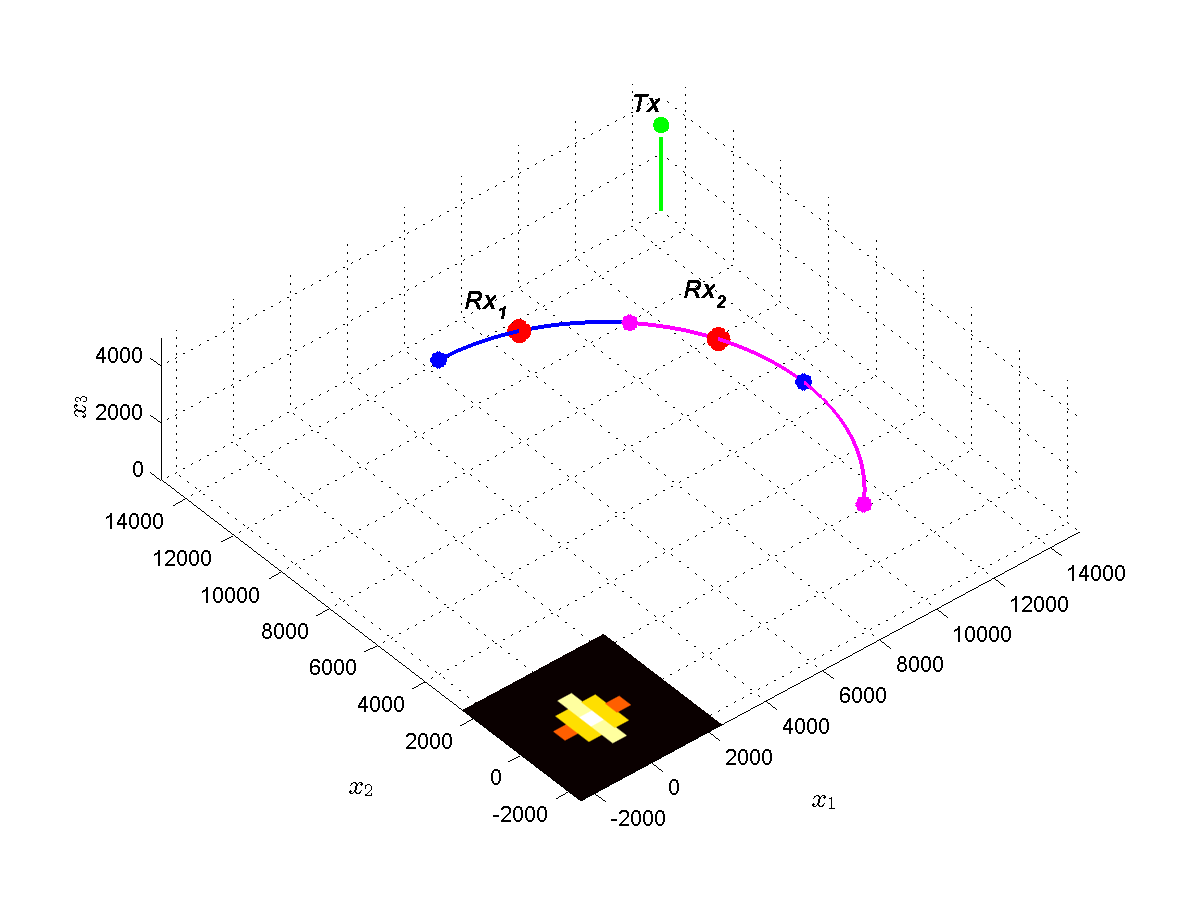}
%\label{fig:circ_scene}
%}
%\subfigure[]{
%\includegraphics[width=0.45\textwidth]{}
%\label{fig:lin_scene}
%}
%\caption{ The simulation configuration used in these simulations, depicting two receivers (red dots). The trajectory of the first receiver is displayed in blue, and the end points marked by stars. The second receiver trajectory is marked in maroon and the end points marked by stars. The stationary transmitter is indicated by a green dot. The scene is small and located at the center of the coordinate system. \subref{fig:circ_scene} $\pi/2$ radians semi-circular trajectory with a $\pi/4$ offset between receivers. \subref{fig:lin_scene} Linear trajectory with $.25$ km offset between receivers. }
%\label{fig:imaging_config}
%\end{figure}

We assume the scene is illuminated by a stationary transmitter located at $[12,12,5]\mathrm{km}$. 
For flat topography, this results in a transmitter elevation angle $\alpha_{\y} = 16.4832$ degrees.
We use system parameters corresponding to common illuminators of opportunity.
 Specifically, we assume a transmit pulse with a flat spectrum and $8$ MHz bandwidth, and center frequencies of $760$ MHz and $2$ GHz, corresponding to DVB-T and WiMAX signals, respectively \cite{Palmer13,Arroyo13}. 
%We consider two trajectories: a semi-circular $\bgamma_1(s) = [7\cos(s),7\sin(s),5]\mathrm{km}, \ s\in[0,\pi/2]$ and a linear trajectory $\bgamma_1(s) = [9s,-9s,5]\mathrm{km}, \ s\in[0.5,0.5]$.
We consider a semi-circular aperture, i.e., $\bgamma_1(s) = [7\cos(s),7\sin(s),5]\mathrm{km}, \ s\in[0,\pi/2]$, where the second receiver traverses the same trajectory with a constant offset of $\pi/4$, i.e., $\bgamma_2(s)=\bgamma_1(s+\pi/4)$.
% and a linear trajectory $\bgamma_1(s) = [9s,-9s,5]\mathrm{km}, \ s\in[0.5,0.5]$.
%In both cases the second receiver $\bgamma_2(s)$ traverse the same trajectory with a constant offset.
%In the case of the semi-circular trajectory $\bgamma_2(s)=\bgamma_1(s+\pi/4)$ and in the case of the linear trajectory $\bgamma_2(s)=\bgamma_1(s)+[2.25,-2.25,0]\mathrm{km}$.
%An illustration of these two imaging configuration are displayed in Figures \ref{fig:circ_scene} and \ref{fig:lin_scene} for circular and linear trajectory, respectively. 
An illustration of this imaging configuration is displayed in Figure \ref{fig:imaging_config}.
Based on this configuration, the resolution lower bound in \eqref{eq:resolution} is $1.53$m$^2$ and $0.58$m$^2$ for the $760$ MHz and $2$ GHz centrer frequencies, respectively. 
Thus, the $10$m$^2$ pixel spacing used in our experiments satisfies the resolution condition.
%\textcolor[rgb]{0.00,0.00,1.00}{Professor Yazici, I will add a sentence on what the lower bound of resolution is based on these values. Will do that soon, have not calculated it yet.}

\begin{figure}[!t]
\centering
\includegraphics[width=0.45\textwidth]{}
\label{fig:circ_scene}
\caption{ The imaging geometry used in the simulations, depicting two receivers $text{Rx}_1,\text{Rx}_2$ depicted as red dots, traversing the same semi-circular trajectory with a constant $\pi/4$ offset. The trajectory of the first receiver is displayed in blue, and the end points marked by blue dots. The second receiver trajectory is marked in maroon and the end points marked by maroon dots. The stationary transmitter $\text{Tx}$ is indicated by a green dot. 
The scene is small and located at the center of the coordinate system (SCENE NOT DRAWN TO SCALE). 
}
\label{fig:imaging_config}
\end{figure}

 We assume isotropic transmit and receive antennas, and for simplicity assume that geometric spreading has been compensated for during data collection.  
  This results in the amplitude function of the forward model in \eqref{eq:forward_A} being set to $1$. 
  The data at each receiver are simulated using \eqref{eq:received_signal}, then correlated according to \eqref{eq:correlation}. 

%\textbf{discuss how parameter relate to analysis}
% The isotropic antennas and transmit pulse with a flat spectrum satisfy the assumptions of Theorems \ref{} and \ref{}, as discussed in section \ref{sec:PA_analysis_gaurantees}.
%
%Based on the transmitter elevation and the center frequency of the transmitted waveform the pixel spacing must be larger than $..$ and $..$, for center frequency $8$ MHz and $2$ GHz, respectively.

\subsection{Experimental Results}

\subsubsection{Performance Metrics and Evaluation Criteria}

We denote the true and reconstructed Kronecker scene as $\brho$ and $\brho^{\star}$, and evaluate the reconstruction performance to validate exact recovery ($\brho=\brho^{\star}$) using three error metrics:
\begin{equation}
\label{eq:error_metrics}
\begin{split}
E_{\mathbf{d}}(\brho) & = \frac{\| \mathbf{d} - \mathbf{F}\bar{\brho} \|_2}{\|\mathbf{d}\|_F} \qquad E_{\brho}(\brho) = \frac{\| \brho - \brho^{\star} \|_F}{\|\brho\|_F} \\ & \qquad \quad E_{\tilde{\brho}}(\tilde{\brho}) = \frac{\| \tilde{\brho} - \tilde{\brho}^{\star} \|_F}{\|\tilde{\brho}\|_F}
\end{split}
\end{equation}
where $\|\cdot\|_F$ stands for the Frobenius norm.
These metrics evaluate how well the data fidelity constraint is satisfied, and how close the Kronecker scene and scene reflectivity are to the actual values.
We consider success for values smaller than $0.05 \%$ error with correct trace and rank. 
The error metrics show that we have obtained a feasible solution, and exact trace and rank confirm that we have reached a minimizer of $\mathrm{P}_4$, which is equivalent to that of $\mathrm{P}_1$. 
These conditions also verify that realistic center frequencies satisfy the assumptions of the recovery Theorems.

We run simulations for both cross-correlated and auto-correlated data.
The auto-correlated data corresponds to passive phaseless imaging. 
We run the algorithm stated in \eqref{eq:opt_alg} for $5000$ iterations with $\lambda = 20$ to recover $\brho$.
While the algorithm may not have completely converged in all cases, we consider a longer running time to be impractical.
We then reconstruct the scene reflectivity from $\tilde{\brho}$ by keeping the eigenvector corresponding to the largest eigenvalue. 
%Thus, we consider algorithm behaviour in evaluating performance.
%When the algorithm converges to the correct solution, there is only one eigenvector, and when the algorithm fails to converge to the correct solution the scene reflectivity is the eigenvector corresponding to the leading eigenvalue.
% and the effect of $\lambda$ on algorithm performance.

\subsubsection{Reconstruction Results}

\begin{table*}[!t]
\centering
\begin{tabular}{|c|c|c|c|c|c|c|}
\hline
$f_c$ & Type  & Tr & rank & $E_{\mathbf{d}}$ & $E_{\brho}$ & $E_{\tilde{\brho}}$  \\ \hline\hline
\multirow{ 2}{*}{$760$ MHz} & Cross & $7.6800$ & $1$ & $1.4448e-4$ & $4.8579e-5$ & $4.1516e-5$ \\\cline{2-7}
& Auto  & $7.4284$ & $19$ & $0.0234$ & $0.9150$ & $0.8096$ \\ \hline\hline
\multirow{ 2}{*}{$2$ GHz} & Cross & $7.6800$ & $1$ & $1.4961e-4$ & $5.9425e-5$ & $5.0066e-5$ \\\cline{2-7}
& Auto & $7.1306$ & $16$ & $0.0297$ & $0.8806$ & $0.7043$  \\ \hline
\end{tabular}
\caption{Results for semi-circular aperture.}
\label{tab:semi_circ_tab}
\end{table*}

%\begin{table}[!t]
%\centering
%\begin{tabular}{|c|c|c|c|c|c|c|c|}
%\hline
%$f_c$ & Type  & $\lambda$ & Tr & rank & $E_{\mathbf{d}}$ & $E_{\brho}$ & $E_{\tilde{\brho}}$  \\ \hline\hline
%\multirow{ 2}{*}{$760$ MHz} & Cross & $20$ & $7.6800$ & $1$ & $1.4448e-04$ & $4.8579e-5$ & $4.1516e-5$ \\\cline{2-8}
%& Auto  & $20$ & $7.4284$ & $19$ & $0.0234$ & $0.9150$ & $0.8096$ \\ \hline\hline
%\multirow{ 2}{*}{$2$ GHz} & Cross & $20$ & $7.6800$ & $1$ & $1.4961e-4$ & $5.9425e-5$ & $5.0066e-5$ \\\cline{2-8}
%& Auto & $20$ & $7.1306$ & $16$ & $0.0297$ & $0.8806$ & $0.7043$  \\ \hline
%\end{tabular}
%\caption{Results for semi-circular aperture.}
%\label{tab:semi_circ_tab}
%\end{table}

In Table \ref{tab:semi_circ_tab} we tabulate the values of the three error metrics in \eqref{eq:error_metrics}, as well as the trace and the rank of the reconstructed Kronecker scene.
%and the value of $\lambda$ used for the semi-circular and linear trajectories, respectively.  
The true trace of the Kronecker scene displayed in \ref{fig:kron_img_orig} is $7.68$.
%The value of $\lambda$ is important as it needs to be high enough to ensure that the minimum norm regularized solution is the same as $\mathrm{P}_4$ and effects the behaviour of the algorithm. 
%Mainly, a large enough value of $\lambda$ places more emphasis on the trace term of the objective and thus promotes a low-rank solution.
%Thus, ensuring the rank will remain at $1$ then if there is a unique solution of $\mathrm{P}_4$ the algorithm should converge to it.
We find that using cross-correlated measurements results in exact reconstruction for both center frequencies and the reconstructed images visually match the Kronecker scene and reflectivity phantom shown in Figure \ref{fig:img_phantoms}.
The rank and trace match that of the true scene, and all error metrics in \eqref{eq:error_metrics} are below the $0.05\%$ threshold.
From this we conclude that the optimal point of $\mathrm{P}_1$-$\mathrm{P}_5$ has been recovered, validating our analysis.

While we obtain exact recovery for cross-correlated measurements, the algorithm fails to reconstruct the exact solution from auto-correlated or phaseless measurements. 
This suggests that the center frequency is not large enough for Theorem \ref{thm:chai_thm} to hold. 
Unlike our numerical results, \cite{Chai_thesis,Chai11} demonstrate exact reconstruction.
This is explained by the fact that they are using $30$ GHz or $300$ GHz center frequencies, which are not realistic operating frequencies for passive radar.
%\textcolor[rgb]{0.00,0.00,1.00}{ \textbf{ Professor Yazici, I want to phrase this correctly, want to be clear that you understand in their IP publication they do not give the exact frequency. Additionally, in Chai's thesis, it is the simulation section of the first chapter (which does not correspond to their IP paper) that states that there simulations are for $30$ or $300$ Ghz and they do not specify what figures are for what frequencies. All later chapters (including the one corresponding to their IP publication) in his thesis give no indication so I assume they use one of these frequencies throughout the thesis. } }

\begin{figure*}[!t]
\centering
\subfigure[]{
	\includegraphics[width=0.45\textwidth]{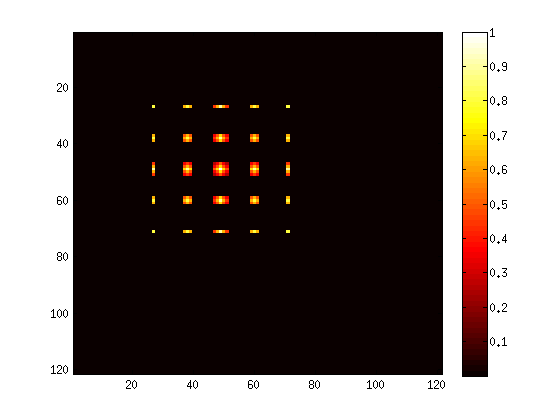}
	\label{fig:kron_img_cross}
}
\subfigure[]{
	\includegraphics[width=0.45\textwidth]{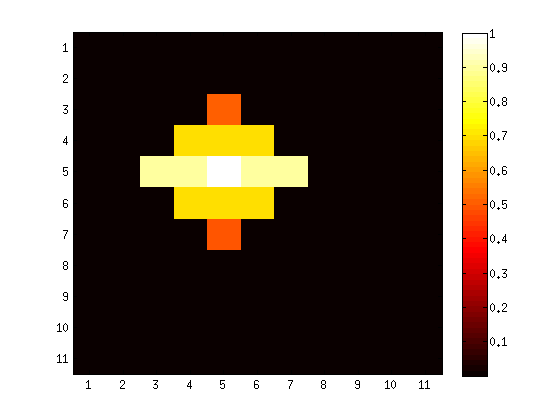}
	\label{fig:eig_img_cross}
}
\subfigure[]{
	\includegraphics[width=0.45\textwidth]{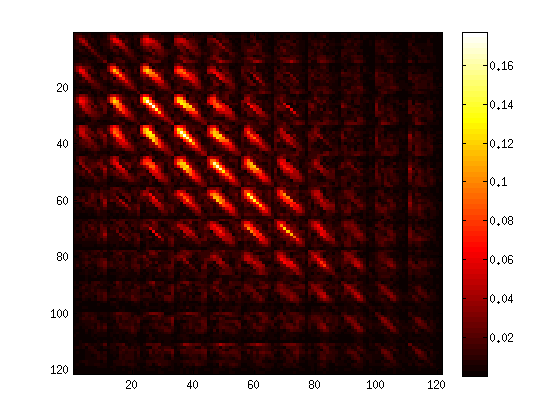}
	\label{fig:kron_img_auto}
}
\subfigure[]{
	\includegraphics[width=0.45\textwidth]{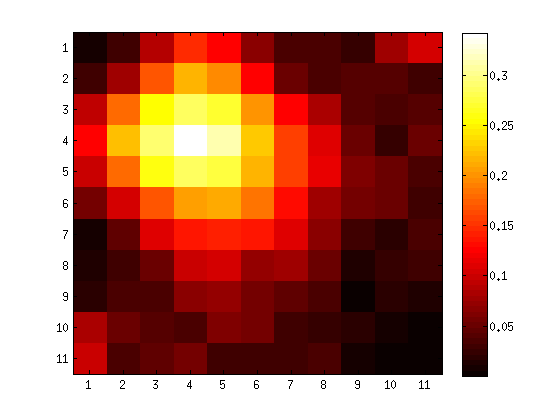}
	\label{fig:eig_img_auto}
}
\caption{ The reconstructed scene reflectivity and Kronecker scene phantom corresponding to the $2$ GHz center frequency experiments listed in Table \ref{tab:semi_circ_tab} using cross-correlated and auto-correlated measurements. \subref{fig:kron_img_cross} Reconstructed Kronecker scene using cross-correlated measurements. \subref{fig:eig_img_cross} Reconstructed scene reflectivity corresponding to the image in Figure \ref{fig:kron_img_cross}. \subref{fig:kron_img_auto} Reconstructed Kronecker scene using auto-correlated measurements. \subref{fig:eig_img_auto} Reconstructed scene reflectivity corresponding to the image in Figure \ref{fig:kron_img_auto}. }
\label{fig:recon_images}
\end{figure*}

In Figure \ref{fig:recon_images} we display the reconstructed images for $2$ GHz center frequency. Figures \ref{fig:kron_img_cross} and \ref{fig:kron_img_auto} display the reconstructed Kronecker scene using cross-correlated and auto-correlated measurements, respectively. 
The scene reflectivity corresponding to these images are shown in Figures \ref{fig:eig_img_cross} and \ref{fig:eig_img_auto}, respectively.
Visually, the reconstruction is exact when using cross-correlated measurements.
In the case of phaseless passive imaging, it is clear that the algorithm fails to converge to the correct solution. The resulting image is significantly blurred and the true target shape and reflectivity values are not recovered.

%\begin{table}[!t]
%\centering
%\begin{tabular}{|c|c|c|c|c|c|c|c|}
%\hline
%$f_c$ & Type  & $\lambda$ & Tr & rank & $E_{\mathbf{d}}$ & $E_{\brho}$ & $E_{\tilde{\brho}}$  \\ \hline\hline
%\multirow{ 2}{*}{$760$ MHz} & Cross &  &  &  &  &  &   \\\cline{2-8}
%& Auto  &  &  &  &  &  &  \\ \hline\hline
%\multirow{ 2}{*}{$2$ GHz} & Cross & $20$ & $7.3696$ & $24$ & $0.0154$ & $0.2980$ & $0.1502$ \\\cline{2-8}
%& Auto &  &  &  &  &  &   \\ \hline
%\end{tabular}
%\caption{Results for linear aperture.}
%\label{tab:lin_tab}
%\end{table}

\subsubsection{Algorithm Behaviour}

\begin{figure}
\centering
\subfigure[]{
	\includegraphics[width=0.45\textwidth]{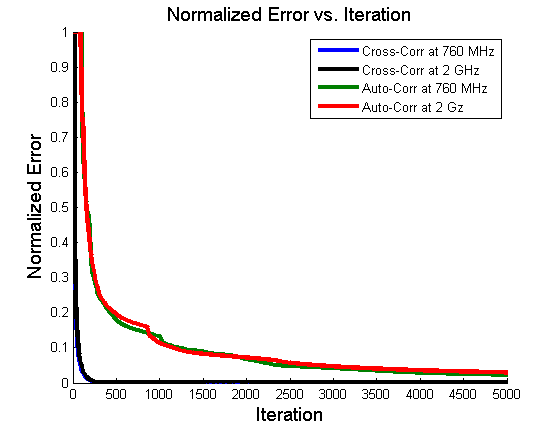}
	\label{fig:error_plot}
}
\subfigure[]{
	\includegraphics[width=0.45\textwidth]{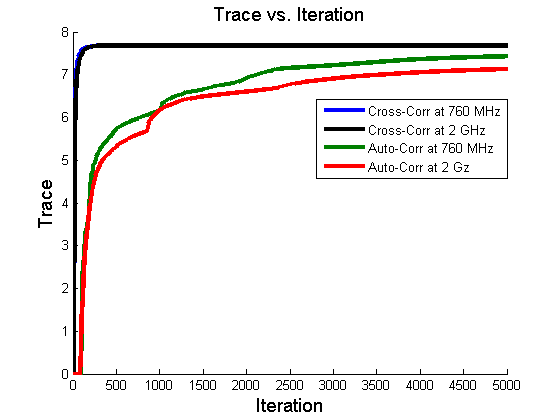}
	\label{fig:trace_plot}
}
\subfigure[]{
	\includegraphics[width=0.45\textwidth]{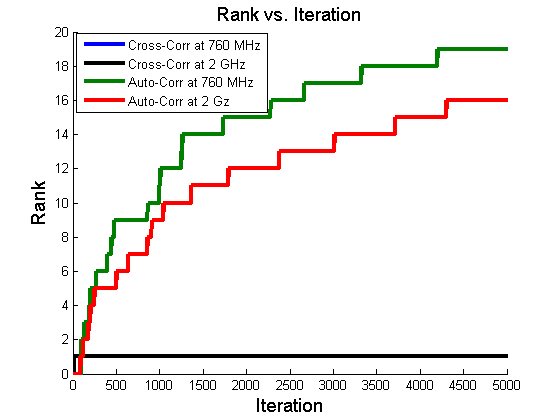}
	\label{fig:rank_plot}
}
\caption{ Plots illustrating algorithm performance for results displayed in Table \ref{tab:semi_circ_tab}. The cross-correlation results for $760$ MHz and $2$ GHz are displayed in blue and black, respectively. The auto-correlation results for for $760$ MHz and $2$ GHz are displayed in green and red, respectively. \subref{fig:error_plot} Error $E_{\mathbf{d}}$ verse iteration. \subref{fig:trace_plot} Trace verse iteration. \subref{fig:rank_plot} Rank verse iteration. }
\label{fig:plots}
\end{figure}

To demonstrate the convergence behavior of our method, we plot the recovered trace trace, rank, and $E_{\mathbf{d}}$ verses iteration number in Figure \ref{fig:plots}, for each experiment in Table \ref{tab:semi_circ_tab}.  
The plot of $E_{\mathbf{d}}$ is shown in Figure \ref{fig:error_plot}.
We observe that in every case the error decreases monotonically, and appears to be converging to zero.
When using phaseless measurements it appears that the algorithm has not fully converged, confirmed by the plots of trace in \ref{fig:trace_plot}. 
Also, based on the plot of trace and rank, it appears that the algorithm is converging to an incorrect solution. 
This is explained by the fact that the center frequency is not large enough for the result of Theorem \ref{thm:phaseless_thm} to hold.
This means the solution of $\mathrm{P}_4$ is not unique-and-equal to that of $\mathrm{P}_1$.
This is further verified by observing that the rank is converging to a value larger than one, which implies that the minimum norm solution of $\mathrm{P}_5$ is not rank-one.

%\textcolor[rgb]{0.00,0.00,1.00}{
Theorem \ref{thm:converge} states that the sequence of Kronecker scene estimates converge to the true solution, implying that $E_{\mathbf{d}}$ and trace converge to the correct value, as observed when using cross-correlated measurements.
Although, an interesting behavior is that the rank and trace increase monotonically while converging.
While this may seem odd at first, it is explained by the form of the algorithm.
Specifically, \eqref{eq:opt_alg} does not explicitly solve $\mathrm{P}_5$ by minimizing the trace of $\brho$, but instead solves the dual problem.
The solution of the dual problem lower bounds the Lagrangian at the optimal point, and since strong duality holds equality is obtained.
Therefore, at each iteration the trace of the iterate increases until the lower bound converges to the optimal value of the solution.
%When solving this problem, we first initialize $\brho^0$ and $\bxi^0$ as zero.
%Then at each iteration, the Lagrange multiplier $\bxi^k$ is increased by adding the scaled residual.
%Following this operation, $\brho^{k}$ is then updated by backprojecting $\bxi^{k}$.
%% and thresholding the spectrum of the resulting image domain representation by a fixed value.
%Therefore, at each iteration the magnitude of the elements of $\brho^{k}$ are increased.
%As a consequence, the trace and rank of $\brho$ also increase at each iteration.
%}

%\textcolor[rgb]{0.00,0.00,1.00}{
As stated before, at the $2$ GHz center frequency Theorem \ref{thm:phaseless_thm} does not hold.
Therefore, the solution of $\mathrm{P}_4$ and $\mathrm{P}_5$ are not the same as that of $\mathrm{P}_1$.
This implies that the solution to $\mathrm{P}_4$ and $\mathrm{P}_5$ may not be rank-one, in fact, the solution likely has rank greater than one.
Hence, the rank increases per iteration since the algorithm is converging to a solution with rank greater than one.
This is further verified by the fact that when cross-correlated measurements are used the result of Theorem 7 applies and the unique solution is rank-one. Thus, at each iteration the rank of the iterate remains at one.
%This means that we are not guaranteed to obtain the minimum rank solution, so the algorithm is converging to a solution with larger rank.
%Since the solution is not rank-one, the rank increases per iteration as the algorithm converges. %to the incorrect solution, which is clearly not the unique rank-one solution.
%This is verified by the fact that when using cross-correlated measurements the result of Theorem \ref{thm:gaurantee} applies and the unique solution is rank-one.
%Therefore, the rank at each iteration stays at one.
%%Another observation is that when using cross-correlated measurements the rank reaches one and stays there, while using auto-correlated measurements the rank continues to increase beyond one.
%}

%\textcolor[rgb]{0.00,0.00,1.00}{
This illustrates the difference in the behavior of the algorithm for cross- and auto-correlated measurements. 
It also offers a heuristical method to determine early on if the analysis holds given the imaging parameters.
If the rank increases above one, it can be assumed that the center frequency is not large enough and the algorithm can be terminated to save computational resources.
\section{Conclusion}\label{sec:conclusion}

In this work, we presented a performance analysis of the LRMR based convex passive SAR imaging method. 
Using the restricted isometry property we show that exact recovery is possible for sufficiently large center frequencies.
In evaluating the restricted isometry constant we determine a sufficient condition for exact recovery.
This condition arises from the application of the method of stationary phase. 
The condition states that the distance between scatterers must be larger than a certain bound that depends on the center frequency of the transmitted waveform and the elevation angle of the transmitter. For typical transmitters of opportunity operating in the $700$ MHz to $2$ GHz range, this condition translates into an achievable resolution that is an order of magnitude higher than that of Fourier based methods used in PCL and TDOA-based backprojection making convex LRMR-based passive SAR imaging a super-resolution technique.

Furthermore, we show that our analysis applies to phaseless passive imaging, in which measurements are auto-correlated instead of cross-correlated.
Our recovery results apply to phaseless imaging, however, with larger RIC and larger center frequency than those required by passive SAR with cross-correlated measurements.
%Furthermore, we smaller RIC which means lower center frequencies are required for exact recovery. 

To solve the imaging problem we present a projected gradient descent type algorithm and prove its convergence.
We verify our analysis with extensive numerical simulations.
We show that for typical operating frequencies used in passive radar, using cross-correlated measurements the scene reflectivity can be recovered exactly.

While we consider primarily delay based cross-correlated measurements, our methodology can be extended to study recovery results for Doppler-based correlations \cite{Yarman10,Wang11}.  Additionally, our results apply to passive synthetic aperture imaging acoustics and geophysics as well as passive imaging with sufficiently large distributed receivers \cite{LWang12,wang12,Wang10}.

% \newpage

%\appendix
%\input{appendix2_EM1}
%\section{Chapter 4 Appendix: Proofs}
\begin{appendices}

\section{Method of Stationary Phase}\label{sec:mosp}

The stationary phase theorem states that if u is a (possibly complex-valued) smooth function of compact support in $\Rb^n$, and $\varphi$ is a real-valued function with
only non-degenerate critical points, then as $\lambda \rightarrow \infty$
\begin{equation}
\begin{split}
\int e^{i\lambda\varphi(\bi x)} u(\bi x) d^n\bi x & =  \sum\limits_{\{ \bi x_0 \vert D\varphi(\bi x_0)=0 \}} \left( \frac{2\pi}{\lambda} \right)^{n/2} \\ & \times \frac{e^{i\lambda\varphi(\bi x_0)}e^{i\pi/4\text{sgn}D^2\varphi(\bi x_0)}}{\sqrt{|\text{det}D^2\varphi(\bi x_0)|}}u(\bi x_0) \\ & + \Oc(\lambda^{-n/2-1}).
\end{split}
\end{equation}
Here $D\varphi$ denotes the gradient of $\varphi$, and $D^2 \varphi$ denotes the Hessian, and sgn denotes the signature of a matrix, i.e. the number of positive eigenvalues minus the negative ones. 
Furthermore, a point $\bi x_0 \in \Rb^n$ is called a non-degenerate critical point
if $D\varphi(\bi x_0 ) = 0$ and the Hessian matrix $D^2 \varphi(\bi x_0 )$ has non-zero determinant.

\section{Sketch of Proof of Theorem \ref{thm:converge}}\label{sec:converge_proof_sketch}

Observe that in the noiseless case there exists a $\bar{\brho}$ such that $f(\brho)=0$, therefore, Slater's condition is satisfied and strong duality holds. 
This implies that the solution of \eqref{eq:opt_alg} is the unique minimizer of $\mathrm{P}_5$.
After establishing strong convexity of the objective function of $\mathrm{P}_5$, the proof follows the argument of Theorem 3.2 in \cite{Cai2010}.
% which is used for the singular value thresholding algorithm applied to matrix completion. \\

%\textcolor[rgb]{0.00,0.00,1.00}{Professor Yazici, the definition of strong convexity is $\langle \nabla f(x) - \nabla f(y),x-y \rangle \leq M\|x - y\|_2^2$. I have stated that we arrive at the definition where the constant $M=1$. This generalizes to the case of non-differentiable functions, so in our case we have the subgradient. }

Let $\brho,\brho' \in C_+$, where $C_+$ denotes the PSD cone. Then the derivative of $J_5$ at these points are $\mathbf{p} = \lambda\mathbf{I} + \mathbf{p}_0 + \brho$ and $\mathbf{p}' = \lambda\mathbf{I} + \mathbf{p}'_0 + \brho$, where $\mathbf{p}_0 \in  \partial i_{C_+}(\brho)$ and $\mathbf{p}'_0 \in \partial i_{C_+}(\brho')$ \footnote{ $v \in \partial g(u)$, denotes that $v$ is a subgradient of $g$ at $u$.}. 
Then,
\begin{equation}
\begin{aligned}
\langle \mathbf{p} - \mathbf{p}',\brho - \brho' \rangle =  \langle \mathbf{p}_0 - \mathbf{p}_0', \brho - \brho' \rangle + \| \brho - \brho' \|^2_F.
\label{eq:diff_product}
\end{aligned}
\end{equation}
From the definition of a subgradient of the PSD cone it can be shown that the first term on the right is  non-negative, and \eqref{eq:diff_product} becomes
\begin{equation}
\langle \mathbf{p} - \mathbf{p}',\brho - \brho' \rangle \geq \| \brho - \brho' \|^2_F,
\end{equation}
which is the definition of strong convexity with constant $1$.

Let $(\brho^{\star},\bxi^{\star})$ be a primal-dual optimal pair for problem $\mathrm{P}_5$.
From the first order optimality conditions, the strong convexity of $J_5$, and the Lipshitz continuity of $f(\brho) = \mathbf{d}  - \mathbf{F}\brho$, the sequence of Lagrange multiplier iterates obeys:
\begin{equation}
\| \bxi^k - \bxi^{\star} \|^2_2 \leq \| \bxi^{k-1} - \bxi^{\star} \|^2_2 - (2\beta_k - \beta_k^2 \alpha^2) \| \bar{\brho}^{\star}-\bar{\brho}^k \|^2_F.
\label{eq:final_form}
\end{equation}
Multiplying \eqref{eq:final_form} by $1/2$, and from the assumption that $\beta^k  \in (0, \alpha)$, we have that $\beta_k - \beta^2_k\alpha^2/2>0$.
Thus, the sequence $\{ \bxi^k \}$ is strictly non-increasing, and has limit point $\bxi^{\star}$.
This implies that $\| \brho^k - \brho^{\star} \|^2_F \rightarrow 0$ as $k \rightarrow \infty$, and proves convergence.

\section{Section \ref{sec:PA_analysis_gaurantees} Proofs}

\subsection{Proof of Lemma \ref{thm:suff_cond}}\label{sec:suff_cond_prf}

We start by lower bounding $|\theta(s,\bar{\bi x}_k, \bar{\bi x}_l)|$.
%To evaluate the minimum, 
Without loss of generality, we let $k'=l'$ and $\x_k=0$. 
This implies $k\neq l$ and $\x_l \neq 0$ since $(k,k',l,l')\in I_2$, and
\begin{equation}
|\theta(s,\bar{\bi x}_k, \bar{\bi x}_l)| = \big\vert |\bgamma_1(s)|-|\x_l-\bgamma_1(s)| + \hat{\y}\cdot\x_l \big\vert.
\end{equation}
Using the reverse triangle inequality and the fact that $|\bgamma_1(s)|>|\x_l|$, the receiver terms cancel and we have
\begin{equation}
|\theta(s,\bar{\bi x}_k, \bar{\bi x}_l)| \geq \big\vert -|\x_l| + \hat{\y}\cdot \x_l \big\vert.
\label{eq:cond2_inter}
\end{equation}
% where $\Delta_x$ is the minimum distance between scatterers. 
Setting $\hat{\y}\cdot \x_l = |\x_l|\cos\varphi(\x_l,\y)$, \eqref{eq:cond2_inter} becomes
\begin{equation}
|\theta(s,\bar{\bi x}_k, \bar{\bi x}_l)| \geq |\x_l||\cos\varphi(\x_l,\y)-1|.
\label{eq:cond2_inter2}
\end{equation}
%because $\x_k=0$,

Since the ground topography is flat, $\cos(\varphi(\x_l,\y)) = \cos\alpha_{\y}\cos\varphi(\x_l)$ where $\varphi(\x_l)$ is the angle between $\x_l$ and the projection of $\y$ onto the ground plane.
Noting that $\cos\varphi(\x_l) \in [-1,1]$ and $\cos\alpha_{\y} \in (0,1)$ is fixed, \eqref{eq:cond2_inter2} is lower bounded when $\cos\alpha_{\y}\cos\varphi(\x_l)$ is largest. This occurs when $\cos\varphi(\x_l)=1$.
%Using this fact, we lower bound \eqref{eq:cond2_inter2} as
Thus,
%Since $\varphi(\x_k) \in [0,\pi]$ the cosine term is non-negative and achieves its maximum at unity, we have:
\begin{equation}
|\theta(s,\bar{\bi x}_k, \bar{\bi x}_l)| \geq |\x_l||1-\cos\alpha_{\y}|. 
\label{eq:theta_lower_bnd_2}
\end{equation}

Taking the minimum of \eqref{eq:theta_lower_bnd_2} with respect to $(k,k',l,l')\in I_2$ we have
\begin{equation}
\min\limits_{(k,k',l,l')\in I_2} |\theta(s,\bar{\bi x}_k, \bar{\bi x}_l)| \geq \min\limits_{(k,k',l,l')\in I_2} |\x_l||1-\cos\alpha_{\y}|. 
\label{eq:theta_lower_bnd_3}
\end{equation}
Since the right hand side of \eqref{eq:theta_lower_bnd_3} does not depend on $(k',l')$ it reduces to
\begin{equation}
\min\limits_{(k,k',l,l')\in I_2} |\theta(s,\bar{\bi x}_k, \bar{\bi x}_l)| \geq \Delta_x|1-\cos\alpha_{\y}|,
\label{eq:theta_lower_bnd_4}
\end{equation}
since $\min\limits_{k,l} |\x_k-\x_l| = \min\limits_{k,l} |\x_l| = \Delta_x$.
Multiplying \eqref{eq:theta_lower_bnd_2} by $\omega_c/c_0$ we have
\begin{equation}
\min\limits_{(k,k',l,l')\in I_2} \frac{\omega_c |\theta(s,\bar{\bi x}_k, \bar{\bi x}_l)|}{c_0} \geq  \frac{\omega_c|\Delta_x||\cos\alpha_{\y}-1|}{c_0}.
\label{eq:res_nec_cond_2}
\end{equation}
Substituting \eqref{eq:resolution} into \eqref{eq:res_nec_cond_2} gives \eqref{eq:nec_bnd_2}, which completes the proof.

\subsection{Proof of Lemma \ref{thm:int_est}}\label{sec:int_est_pf}

%\begin{proof}

In the first case, $(k,k',l,l') \in I_1$, which implies $\bar{\bi x}_k = \bar{\bi x}_l$ and the phase of the exponent in \eqref{eq:inner_int_2} is equal to zero, thus \eqref{eq:inner_int_2} becomes 
\begin{equation}
\begin{aligned}
K(\bar{\bi x}_k, \bar{\bi x}_l) & =  \frac{1}{(4\pi)^4|\y|^4} \int_S B_1(s,\bar{\bi x}_k, \bar{\bi x}_k) ds \\ & \qquad \times \int_{\Omega} B_2(\omega,\bar{\bi x}_k, \bar{\bi x}_k) d\omega \\ & = \frac{C_J}{(4\pi)^4|\y|^4} \int_S B_1(s,\bar{\bi x}_k, \bar{\bi x}_k) ds
\end{aligned}
\label{eq:K_int_proof_1}
\end{equation}
where the second equality follows from \eqref{eq:cj_def}.
To approximate the $s$ integration in \eqref{eq:K_int_proof_1}, we first derive estimates of the integrand.
First, observe that
%\begin{equation}
%\begin{aligned}
%|\x_k-\bgamma_1(s)|^2 = & |\bgamma_1(s)|^2 - 2\x_k\cdot\bgamma_1(s) + |\x_k|^2 \\
%\approx & |\bgamma_1(s)|^2 - 2\x_k\cdot\bgamma_1(s),
%\end{aligned}
%\end{equation}
\begin{equation}
\begin{aligned}
|\x_k-\bgamma_1(s)|^2 \approx |\bgamma_1(s)|^2 - 2\x_k\cdot\bgamma_1(s),
\end{aligned}
\end{equation}
since $|\bgamma_1(s)|^2 \gg |\x_k|^2$.
%Now, observe that\footnote{We use the identity:
%$
%\frac{1}{a+b} = \frac{1}{a}\left( \frac{1}{1+ b/a} \right) \approx \frac{1}{a}\left( 1 - \frac{b}{a} \right) \approx \frac{1}{a} - \frac{b}{a^2}, \
%\forall \ a,b\in \Rb$.
%}  
%\begin{equation}
%\frac{1}{|\bgamma_1(s)|^2 - 2\x_k\cdot\gamma_1(s)} \approx \frac{1}{|\bgamma_1(s)|^2} + \frac{2\x_k\cdot\gamma_1(s)}{|\bgamma_1(s)|^4}.
%\end{equation}
%Since,
%\begin{equation}
%\frac{2\x_k\cdot\hat{\bgamma}_1(s)}{|\bgamma_1(s)|^3} \quad \mathrm{and} \quad |\bgamma_1(s)|^3 \gg |\x_k|,
%\end{equation}
%we have
%\begin{equation}
%\frac{1}{|\bgamma_1(s)|^2 - 2\x_k\cdot\gamma_1(s)} \approx \frac{1}{|\bgamma_1(s)|^2}.
%\end{equation}
The same approximation applies to $1/|\x_k-\bgamma_2(s)|^2$.
%Thus, 
\begin{equation}
B_1(s,\bar{\bi x}_k, \bar{\bi x}_k) \approx \frac{1}{|\bgamma_1(s)|^2|\bgamma_2(s)|^2}
\end{equation}
since $|\bgamma_i| \gg |\x_k|$.
Furthermore, we approximate 
\begin{equation}
B_1(s,\bar{\bi x}_k, \bar{\bi x}_k) \approx \frac{1}{L_{g_1}^2 L_{g_2}^2}
\end{equation}
where $L_{g_i}$ is the range of the $i$-th antenna at the center of the synthetic aperture\footnote{For typical flight trajectories this is the shortest range.}
%Since $\Delta_s \ll |\bgamma_1(s)|$, we select $s_c$ which corresponds to the center of the aperture, and perform a Taylor expansion of $|\bgamma_1(s)|^2$ around this point:
%\begin{equation}
%|\bgamma_1(s)|^2 \approx |\bgamma_1(s_c)|^2 + 2\bgamma_1(s_c)\cdot\dot{\bgamma}_1(s_c)(s-s_c),
%\label{eq:expand_at_sc}
%\end{equation}
%where we have neglected higher order terms as they tend to zero.
%Dividing the left hand side of \eqref{eq:expand_at_sc} by $|\bgamma_1(s)|^4$ and right hand side by $|\bgamma_1(s_c)|^4$ we obtain
%\begin{equation}
%\frac{1}{|\bgamma_1(s)|^2} \approx \frac{1}{|\bgamma_1(s_c)|^2} + \frac{2\bgamma_1(s_c)\cdot\dot{\bgamma}_1(s_c)(s-s_c)}{|\bgamma_1(s_c)|^4}.
%\end{equation}
%Noting that $|\bgamma_1(s_c)|^3 \gg |\dot{\bgamma}_1(s_c)||s-s_c|$, and the same approximation can be made for the $|\bgamma_2(s_c)|^2$ term, we have
%\begin{equation}
%B_1(s,\bar{\bi x}_k, \bar{\bi x}_k) \approx \frac{1}{|\bgamma_1(s_c)|^2|\bgamma_2(s_c)|^2}.
%\end{equation}
%Since $|\bgamma_1(s_c)| = L_{g_1}$ and $|\bgamma_2(s_c)| = L_{g_2}$ the $s$ integral can be approximated as
Thus,
\begin{equation}
\int_S B_1(s,\bar{\bi x}_k, \bar{\bi x}_k) ds \approx \frac{\Delta_s}{L_{g_1}^2 L_{g_2}^2}.
\label{eq:B1_approx}
\end{equation}
Substituting \eqref{eq:B1_approx} into \eqref{eq:K_int_proof_1} we obtain
\begin{equation}
K(\bar{\bi x}_k, \bar{\bi x}_l) \approx \frac{C_J \Delta_s}{(4\pi)^4|\y|^4 L_{g_1}^2 L_{g_2}^2}.
\label{eq:K_est_I1}
\end{equation}

For the second case, $(k,k',l,l') \in I_2$, %\textcolor[rgb]{0.00,0.00,1.00}{ 
and isolating the $\omega$ integration in \eqref{eq:inner_int_2} 
%by shifting the $\omega$ integration to baseband via change of variables.
%Since $\o \in [\o_c-B/2,\o_c+B/2]$, we let $\o'=\o+\o_c$ with $d\o'=d\o$, and 
the integral becomes
\begin{equation}
\begin{aligned}
K(\bar{\bi x}_k, \bar{\bi x}_l) & = \frac{1}{(4\pi)^4|\y|^4}\int_S e^{-i\frac{\omega_c}{c_0} \theta(s,\bar{\bi x}_k, \bar{\bi x}_l)} B_1(s,\bar{\bi x}_k, \bar{\bi x}_l) \\ & \qquad  \times \left[ \int\limits_{-B/2}^{B/2} e^{-i\frac{\omega}{c_0} \theta(s,\bar{\bi x}_k, \bar{\bi x}_l)} B_2(\omega,\bar{\bi x}_k, \bar{\bi x}_l) d\omega \right] ds.
\label{eq:K_int_proof_2}
\end{aligned}
\end{equation}
%}
Then, substituting the approximation of $B_1(s,\bar{\bi x}_k, \bar{\bi x}_l)$ given in \eqref{eq:B1_approx} and \eqref{eq:W_def} into \eqref{eq:K_int_proof_2}, we have
\begin{equation}
\begin{aligned}
K(\bar{\bi x}_k, \bar{\bi x}_l) & \approx \frac{1}{(4\pi)^4|\y|^4 L^2_{g_1} L^2_{g_2}}\int_S e^{-i\frac{\omega_c}{c_0} \theta(s,\bar{\bi x}_k, \bar{\bi x}_l)} \\ & \qquad \qquad \qquad \qquad \ \ \times   W(s,\bar{\bi x}_k,\bar{\bi x}_l) ds.
\end{aligned}
\label{eq:K_int_proof_3}
\end{equation}

Since $W(s,\bar{\bi x}_k,\bar{\bi x}_l)$ satisfies \eqref{eq:mosp_assump}, and the condition of Lemma \ref{thm:suff_cond} is satisfied, we can evaluate the $s$ integration using the method of stationary phase, stated in Appendix \ref{sec:mosp}.
We begin by noting that the only values of $s$ that contribute to the integral are those that satisfy:
\begin{equation}
\begin{aligned}
\dot{\theta}(s_0,\bar{\bi x}_k, \bar{\bi x}_l) = & \widehat{(\x_k-\bgamma_i(s_0))}\cdot\dot{\bgamma}_i(s_0) \\ & - \widehat{(\x_{k'}-\bgamma_j(s_0))}\cdot\dot{\bgamma}_j(s_0) \\ & -  \widehat{(\x_l-\bgamma_i(s_0))}\cdot\dot{\bgamma}_i(s_0) \\ & + \widehat{(\x_{l'}-\bgamma_j(s_0))}\cdot\dot{\bgamma}_j(s_0) = 0.
\end{aligned}
\label{eq:theta_dot}
\end{equation}
%because the integral will oscillate rapidly and experience cancellation for points away from $s_0$.
Examining \eqref{eq:theta_dot}, we see that since $\bar{\bi x}_k \neq \bar{\bi x}_l$, the only stationary points are those for which $\widehat{(\x_k-\bgamma_i(s_0))}\cdot\dot{\bgamma}_i(s_0) - \widehat{(\x_{k'}-\bgamma_j(s_0))}\cdot\dot{\bgamma}_j(s_0) = 0$, of which there are only a finite number. 
Thus, without loss of generality, we consider one of these points, $s_0$, for which $\ddot{\theta}(s_0,\bar{\bi x}_k, \bar{\bi x}_l) \neq 0$.
Letting $\omega_c \rightarrow \infty$ we obtain the estimate:
\begin{equation}
K(\bar{\bi x}_k, \bar{\bi x}_l) \approx \frac{1}{(4\pi)^4|\y|^4 L^2_{g_1}L^2_{g_2}} \left\{ \left( \frac{2\pi}{\omega_c} \right)^{1/2} C_K + \Oc\left( \frac{1}{\omega_c^{3/2}} \right) \right\}
\end{equation}
where 
\begin{equation}
\begin{aligned}
C_K & = \sum\limits_{\{ s_0 \vert \dot{\theta}(s_0,\bar{\bi x}_k, \bar{\bi x}_l)=0 \}} \frac{e^{i\omega_c\theta(s_0,\bar{\bi x}_k, \bar{\bi x}_l)}e^{i\pi/4\ddot{\theta}(\bar{\bi x}_k, \bar{\bi x}_l,s_0)}}{\sqrt{|\ddot{\theta}(s_0,\bar{\bi x}_k, \bar{\bi x}_l)|}} \\ & \qquad \qquad \qquad \qquad \times W(s_0,\bar{\bi x}_k, \bar{\bi x}_l).
\end{aligned}
\label{eq:K_est_I2}
\end{equation}
Combining \eqref{eq:K_est_I1} and \eqref{eq:K_est_I2}, we have the final result stated in \eqref{eq:K_est}.

%
%\begin{equation}
%K(\bar{\bi x}_k, \bar{\bi x}_l) \approx \frac{1}{(4\pi)^4|\y|^4 L^4_g} \left\{ \Oc\left( \frac{1}{\sqrt{\o_c}} \right) + \Oc\left( \frac{1}{\o_c^{3/2}} \right) \right\}.
%\label{eq:K_I2_val}
%\end{equation}
%Equations \eqref{eq:K_I1_val} and \eqref{eq:K_I2_val} give the stated result, and complete the proof. 

\subsection{Proof of Theorem \ref{thm:gaurantee}}\label{sec:gaurantee_pf}

%\begin{proof}
We begin by determining the RIC for the forward model, given in \eqref{eq:forward_model_chap4}. This is done by first evaluating $\|\mathbf{F}\brho\|_2$. Starting with \eqref{eq:l2_norm_data}, we have
 \begin{equation}
\| \mathbf{F}\brho \|^2_2  = \sum\limits_{k,l=1}^{N^2} \sum\limits_{l,l'=1}^{N^2} K(\bar{\bi x}_k,\bar{\bi x}_l) \rho(\bar{\bi x}_k)\rho^*(\bar{\bi x}_l)
\label{eq:L2_norm}
\end{equation}
where the function $K$ is defined by the sum \eqref{eq:inner_sum}, and can be approximated as in \eqref{eq:inner_int_2}.
Separating the summations into two terms using the index sets $I_1$ and $I_2$, we have
\begin{equation}\label{eq:2_norm_split}
\begin{aligned}
\| \mathbf{F}\brho \|^2_2 = & \sum_{(k,k',l,l')\in I_1} K(\bar{\bi x}_k,\bar{\bi x}_l) \rho(\bar{\bi x}_k)\rho^*(\bar{\bi x}_l) \\ & \quad + \sum_{(k,k',l,l')\in I_2} K(\bar{\bi x}_k,\bar{\bi x}_l) \rho(\bar{\bi x}_k)\rho^*(\bar{\bi x}_l).
\end{aligned}
\end{equation}
Using the estimate of the integral $K$ for each set, given in Lemma \ref{thm:int_est}, we obtain
\begin{equation}\label{eq:2_norm_split_2}
\begin{aligned}
\| \mathbf{F}\brho \|^2_2 \approx & \frac{ C_J \Delta_s}{(4\pi)^4|\y|^4 L_{g_1}^2 L_{g_2}^2} \sum_{(k,k',l,l')\in I_1} \rho(\bar{\bi x}_k)\rho^*(\bar{\bi x}_l) \\ & \quad + \frac{1}{(4\pi)^4|\y|^4 L_{g_1}^2 L_{g_2}^2} \left\{ \left( \frac{2\pi}{\omega_c} \right)^{1/2} C_K \right. \\ & \left. + \Oc\left( \frac{1}{\omega_c^{3/2}} \right) \right\} \sum_{(k,k',l,l')\in I_2} \rho(\bar{\bi x}_k)\rho^*(\bar{\bi x}_l) \\
\approx & \frac{ C_J \Delta_s}{(4\pi)^4|\y|^4 L_{g_1}^2 L_{g_2}^2} \left[ \|\brho\|^2_F + \left\{ \Oc\left( \frac{1}{\omega_c^{1/2}} \right) \right. \right. \\ & \left. \left. + \Oc\left( \frac{1}{\omega_c^{3/2}} \right) \right\} \sum_{(k,k',l,l')\in I_2} \rho(\bar{\bi x}_k)\rho^*(\bar{\bi x}_l) \right].
\end{aligned}
\end{equation}
%In the second line of \eqref{eq:2_norm_split_2}, we factor out the common terms, and use the fact that the summation of $\rho(\bar{\bi x}_k)\rho^*(\bar{\bi x}_l)$ over $I_1$ is the square of the Frobenious norm of $\brho$.
Clearly, 
%the last two terms go to zero 
as $\omega_c \rightarrow \infty$, 
%and \eqref{eq:2_norm_split_2} becomes
%With the constant compensated for using the filter given in proposition \ref{thm:prop}. We have that 
\begin{equation}
\| \mathbf{F}\brho \|^2_2 \ \sim \frac{\Delta_s C_J}{(4\pi)^4|\y|^4 L_{g_1}^2 L_{g_2}^2} \| \brho \|^2_{F}.
\end{equation}
Therefore, up to normalization, 
%and taking the square root of both sides, 
we have
\begin{equation}
\| \mathbf{F}\brho \|_2 \ \sim \| \brho \|_{F}.
\label{eq:F_norm_final}
\end{equation}
%where $\| \brho \|_F$ denotes the Frobenious norm of $\brho$.
From \eqref{eq:F_norm_final} we see that $\mathbf{F}$ is approximately an isometric mapping, which implies that the RIP constant is zero.
Therefore, by Theorem \ref{thm:recht_thm2} the solution $\mathrm{P}_4$ is the unique rank-one solution. 
Hence, the locations and reflectivities of the scatterers can be recovered exactly as the eigenvalue/vector pair of the recovered Kronecker image. 

%\end{proof}

\subsection{Proof of Theorem \ref{thm:phaseless_thm}}\label{sec:phaseless_gaurantee_pf}

The proof is very similar to that of Theorem \ref{thm:gaurantee}, given in Appendix \ref{sec:gaurantee_pf}.
Therefore, we omit details which are repetitive and note the appropriate changes that need to be made.
%As before, we consider the $\ell_2$ norm in \eqref{eq:L2_norm} where $\mathbf{F}$ is the discrete forward model with $\bgamma_1(s)=\bgamma_2(s)$. As was done in section \ref{sec:gaurantee_pf} we split up the summation over the sets $I_1,\tilde{I}_2$, and $I_3$ as follows:
As before, we estimate
\begin{equation}
\| \mathbf{F}\brho \|^2_2 = \sum\limits_{(k,k',l,l')\in \{ \tilde{I}_1 ,\tilde{I}_2 \}} K(\bar{\bi x}_k,\bar{\bi x}_l) \rho(\bar{\bi x}_k)\rho^*(\bar{\bi x}_l) 
\end{equation}
by splitting the summation over the index sets $\tilde{I}_1, \tilde{I}_2$ and estimating each separately, as follows
\begin{equation}\label{eq:phaseless_2_norm_split_1}
\begin{aligned}
\| \mathbf{F}\brho \|^2_2 = & \sum_{(k,k',l,l')\in \tilde{I}_1} K(\bar{\bi x}_k,\bar{\bi x}_l) \rho(\bar{\bi x}_k)\rho^*(\bar{\bi x}_l) \\ 
& \quad + \sum_{(k,k',l,l')\in \tilde{I}_2} K(\bar{\bi x}_k,\bar{\bi x}_l) \rho(\bar{\bi x}_k)\rho^*(\bar{\bi x}_l).
\end{aligned}
\end{equation}
Using the estimate of the integral $K$, the fact that $\tilde{I}_1 = I_1 \cup I_3$, and letting $\omega_c \rightarrow \infty$ \eqref{eq:phaseless_2_norm_split_1} becomes
\begin{equation}\label{eq:phaseless_2_norm_split_2}
\begin{aligned}
\| \mathbf{F}\brho \|^2_2 \sim & \sum_{(k,k',l,l')\in I_1} \rho(\bar{\bi x}_k)\rho^*(\bar{\bi x}_l) \\ 
& \quad + \sum_{(k,k',l,l')\in I_3} \rho(\bar{\bi x}_k)\rho^*(\bar{\bi x}_l).
\end{aligned}
\end{equation}
In \eqref{eq:phaseless_2_norm_split_2}, the first term is the square of the Frobenius norm, and the second term is the square of the trace.
Thus,
\begin{equation}\label{eq:phaseless_2_norm_split_3}
\| \mathbf{F}\brho \|^2_2 \sim \| \brho \|^2_{F} + \left( \text{Tr}(\brho) \right)^2.
\end{equation}

Using $0 \lesssim \text{Tr}(\brho) \leq \sqrt{r}\| \brho \|_F$ in \eqref{eq:phaseless_2_norm_split_3} we have
\begin{equation}
\label{eq:phaseless_norm_bnd_1}
\| \brho \|_F^2 \lesssim \| \mathbf{F}\brho \|^2_2 \lesssim (1 + r)\| \brho \|_F^2,
\end{equation}
where $r = \text{rank}(\brho)$,
and taking the square root, \eqref{eq:phaseless_norm_bnd_1} becomes
\begin{equation}
\| \brho \|_F \lesssim \| \mathbf{F}\brho \|_2 \lesssim \sqrt{1 + r}\| \brho \|_F.
\end{equation}
Using the definition of RIP and equating the lower and upper bounds to $1-\delta_r$ and $1+\delta_r$, respectively, we have that the RIC for $\delta_{2r} = \max(0,1 - \sqrt{3}) < 1$.
Thus, from Theorem \ref{thm:chai_thm} the solution of $\mathrm{P}_3$ and $\mathrm{P}_1$ is equivalent and unique, and the exact locations and intensities of the scatterers is recovered.

\end{appendices}

\bibliographystyle{IEEEtran}
\bibliography{IEEEabrv,ref,biblioI}

\end{document}